\newtheorem{proposition}{Proposition}
\newtheorem{lemma}[proposition]{Lemma}
\newtheorem{corollary}[proposition]{Corollary}
\def\squareforqed{\hbox{\rlap{$\sqcap$}$\sqcup$}}
\def\qed{\ifmmode\squareforqed\else{\unskip\nobreak\hfil
\penalty50\hskip1em\null\nobreak\hfil\squareforqed
\parfillskip=0pt\finalhyphendemerits=0\endgraf}\fi}
\def\endenv{\ifmmode\;\else{\unskip\nobreak\hfil
\penalty50\hskip1em\null\nobreak\hfil\;
\parfillskip=0pt\finalhyphendemerits=0\endgraf}\fi}
\newenvironment{proof}{\noindent \textbf{{Proof~} }}{\hfill $\blacksquare$}
\newcounter{remark}
\newenvironment{remark}[1][]{\refstepcounter{remark}\par\medskip\noindent%
\textbf{Remark~\theremark #1} }{\medskip}
\newcounter{example}
\mathchardef\ordinarycolon\mathcode`\:
\def\vcentcolon{\mathrel{\mathop\ordinarycolon}}
\newmdenv[skipabove=7pt,
skipbelow=7pt,
backgroundcolor=darkblue!15,
innerleftmargin=5pt,
innerrightmargin=5pt,
innertopmargin=5pt,
leftmargin=0cm,
rightmargin=0cm,
innerbottommargin=5pt,
linewidth=1pt]{tBox}
\newmdenv[skipabove=7pt,
skipbelow=7pt,
backgroundcolor=red!15,
innerleftmargin=5pt,
innerrightmargin=5pt,
innertopmargin=5pt,
leftmargin=0cm,
rightmargin=0cm,
innerbottommargin=5pt,
linewidth=1pt]{rBox}
\newmdenv[skipabove=7pt,
skipbelow=7pt,
backgroundcolor=blue2!25,
innerleftmargin=5pt,
innerrightmargin=5pt,
innertopmargin=5pt,
leftmargin=0cm,
rightmargin=0cm,
innerbottommargin=5pt,
linewidth=1pt]{dBox}
\newmdenv[skipabove=7pt,
skipbelow=7pt,
backgroundcolor=darkkblue!15,
innerleftmargin=5pt,
innerrightmargin=5pt,
innertopmargin=5pt,
leftmargin=0cm,
rightmargin=0cm,
innerbottommargin=5pt,
linewidth=1pt]{sBox}
\definecolor{darkblue}{RGB}{0,76,156}
\definecolor{darkkblue}{RGB}{0,0,153}
\definecolor{blue2}{RGB}{102,178,255}
\definecolor{darkred}{RGB}{195,0,0}
\newcommand{\nc}{\newcommand}
\nc{\rnc}{\renewcommand}
\nc{\lbar}[1]{\overline{#1}}
\nc{\bra}[1]{\langle#1|}
\nc{\ket}[1]{|#1\rangle}
\nc{\ketbra}[2]{|#1\rangle\!\langle#2|}
\nc{\braket}[2]{\langle#1|#2\rangle}
\nc{\proj}[1]{| #1\rangle\!\langle #1 |}
\nc{\avg}[1]{\langle#1\rangle}
\nc{\rank}{\operatorname{Rank}}
\nc{\smfrac}[2]{\mbox{$\frac{#1}{#2}$}}
\nc{\tr}{\operatorname{Tr}}
\nc{\ox}{\otimes}
\nc{\dg}{\dagger}
\nc{\bu}{{\mathbf{u}}}
\nc{\cA}{{\cal A}}
\nc{\cB}{{\cal B}}
\nc{\cC}{{\cal C}}
\nc{\cD}{{\cal D}}
\nc{\cE}{{\cal E}}
\nc{\cF}{{\cal F}}
\nc{\cG}{{\cal G}}
\nc{\cH}{{\cal H}}
\nc{\cI}{{\cal I}}
\nc{\cJ}{{\cal J}}
\nc{\cK}{{\cal K}}
\nc{\cL}{{\cal L}}
\nc{\cM}{{\cal M}}
\nc{\cN}{{\cal N}}
\nc{\cO}{{\cal O}}
\nc{\cP}{{\cal P}}
\nc{\cQ}{{\cal Q}}
\nc{\cR}{{\cal R}}
\nc{\cS}{{\cal S}}
\nc{\cT}{{\cal T}}
\nc{\cU}{{\cal U}}
\nc{\cV}{{\cal V}}
\nc{\cX}{{\cal X}}
\nc{\cY}{{\cal Y}}
\nc{\cZ}{{\cal Z}}
\nc{\cW}{{\cal W}}
\nc{\RR}{{{\mathbb R}}}
\nc{\CC}{{{\mathbb C}}}
\nc{\FF}{{{\mathbb F}}}
\nc{\NN}{{{\mathbb N}}}
\nc{\ZZ}{{{\mathbb Z}}}
\nc{\PP}{{{\mathbb P}}}
\nc{\QQ}{{{\mathbb Q}}}
\nc{\UU}{{{\mathbb U}}}
\nc{\EE}{{{\mathbb E}}}
\nc{\id}{{\operatorname{id}}}
\nc{\supp}{{\operatorname{supp}}}
\newcommand{\set}[1]{ \left\{ #1 \right\} }
\newcommand{\setcond}[2]{ \left\{ #1 : #2 \right\} }
\newcommand{\bellplus}{\ket{\Phi}}
\newcommand{\ghz}[1]{ \ket{\textup{GHZ}_{ #1 }} }
\newcommand{\w}[1]{ \ket{\textup{W}_{ #1 }} }
\newcommand{\ghzlo}[1]{ \cG^{ #1 } } % center LO for GHZ state
\newcommand{\wlo}[1]{ \cW^{ #1 } } % center LO for W state
\newcommand{\esys}{ e } % end system
\newcommand{\csys}{ c } % central system
\newcommand{\enode}[1]{ \esys_{ #1 } } % end node
\newcommand{\cnode}[1]{ \csys_{ #1 } } % center node
\newcommand{\beginst}{ \ket{\varphi} }
\newcommand{\finalst}[1]{ \ket{\varphi_f^{#1}} }
\newcommand{\recover}[2][]{ V_{#1}^{#2} }
\DeclarePairedDelimiter{\norm}{\lVert}{\rVert}
\def\grd@save@target#1{%
  \def\grd@target{#1}}
\def\grd@save@start#1{%
  \def\grd@start{#1}}
\tikzset{
  grid with coordinates/.style={
    to path={%
      \pgfextra{%
        \edef\grd@@target{(\tikztotarget)}%
        \tikz@scan@one@point\grd@save@target\grd@@target\relax
        \edef\grd@@start{(\tikztostart)}%
        \tikz@scan@one@point\grd@save@start\grd@@start\relax
        \draw[minor help lines,magenta] (\tikztostart) grid (\tikztotarget);
        \draw[major help lines] (\tikztostart) grid (\tikztotarget);
        \grd@start
        \pgfmathsetmacro{\grd@xa}{\the\pgf@x/1cm}
        \pgfmathsetmacro{\grd@ya}{\the\pgf@y/1cm}
        \grd@target
        \pgfmathsetmacro{\grd@xb}{\the\pgf@x/1cm}
        \pgfmathsetmacro{\grd@yb}{\the\pgf@y/1cm}
        \pgfmathsetmacro{\grd@xc}{\grd@xa + \pgfkeysvalueof{/tikz/grid with coordinates/major step}}
        \pgfmathsetmacro{\grd@yc}{\grd@ya + \pgfkeysvalueof{/tikz/grid with coordinates/major step}}
        \foreach \x in {\grd@xa,\grd@xc,...,\grd@xb}
        \node[anchor=north] at (\x,\grd@ya) {\pgfmathprintnumber{\x}};
        \foreach \y in {\grd@ya,\grd@yc,...,\grd@yb}
        \node[anchor=east] at (\grd@xa,\y) {\pgfmathprintnumber{\y}};
      }
    }
  },
  minor help lines/.style={
    help lines,
    step=\pgfkeysvalueof{/tikz/grid with coordinates/minor step}
  },
  major help lines/.style={
    help lines,
    line width=\pgfkeysvalueof{/tikz/grid with coordinates/major line width},
    step=\pgfkeysvalueof{/tikz/grid with coordinates/major step}
  },
  grid with coordinates/.cd,
  minor step/.initial=.2,
  major step/.initial=1,
  major line width/.initial=2pt,
}
\def\problem@s{}
\newcounter{problems@cnt}
\newcommand{\allproblems}{\problem@s}
\definecolor{colortwo}{rgb}{0.4,0.77,0.17}
\definecolor{colorthree}{rgb}{0.01,0.51,0.93}
\pgfplotsset{compat=1.18}
\newcommand{\update}[1]{\textcolor{black}{#1}}
\begin{document}

\title{Quantum Entanglement Allocation through a Central Hub}

\author{Yu-Ao Chen}
\thanks{Yu-Ao Chen and Xia Liu contributed equally to this work.}
\affiliation{Thrust of Artificial Intelligence, Information Hub, \\ The Hong Kong University of Science and Technology (Guangzhou), Guangdong 511453, China}
\author{Xia Liu}
\thanks{Yu-Ao Chen and Xia Liu contributed equally to this work.}
\affiliation{Thrust of Artificial Intelligence, Information Hub, \\ The Hong Kong University of Science and Technology (Guangzhou), Guangdong 511453, China}
\author{Chenghong Zhu}
\affiliation{Thrust of Artificial Intelligence, Information Hub, \\ The Hong Kong University of Science and Technology (Guangzhou), Guangdong 511453, China}
\author{Lei Zhang}
\affiliation{Thrust of Artificial Intelligence, Information Hub, \\ The Hong Kong University of Science and Technology (Guangzhou), Guangdong 511453, China}
\author{Junyu Liu}
\affiliation{SeQure, Chicago, IL 60615, USA}
\author{Xin Wang}
\email{felixxinwang@hkust-gz.edu.cn}
\affiliation{Thrust of Artificial Intelligence, Information Hub, \\ The Hong Kong University of Science and Technology (Guangzhou), Guangdong 511453, China}

\date{\today}

\begin{abstract}
Establishing a fully functional quantum internet relies on the efficient allocation of multipartite entangled states, which enables advanced quantum communication protocols, secure multipartite quantum key distribution, and distributed quantum computing. \textcolor{black}{In this work, we propose local operations and classical communication (LOCC) protocols for allocating generalized $N$-qubit W states within a centralized hub architecture, where the central hub node preshares Bell states with each end node. We develop a detailed analysis of the optimality of the resources required for our proposed W-state allocation protocol and the previously proposed GHZ-state protocol.} Our results show that these protocols deterministically and exactly distribute states using only $N$ qubits of quantum memory within the central system, with communication costs of $2N - 2$ and $N$ classical bits for the W and GHZ states, respectively. These resource-efficient LOCC protocols are further proven to be optimal within the centralized hub architecture, outperforming conventional teleportation protocols for entanglement distribution in both memory and communication costs. Our results provide a more resource-efficient method for allocating essential multipartite entangled states in quantum networks, paving the way for the realization of a quantum internet with enhanced efficiency.

\end{abstract}

\maketitle

%%%%%%%%%%%%%%%%%%%%%%%%%%%%%%%%%%%%%%%%%%%%%%%%%%%%%%%%%%%%%%%%%%%%%%%%%%%
% \tableofcontents
% \newpage
%%%%%%%%%%%%%%%%%%%%%%%%%%%%%%%%%%%%%%%%%%%%%%%%%%%%%%%%%%%%%%%%%%%%%%%%%%%
%%%%%%%%%%%%%%%%%%%%%%%%%%%%%%%%%%%%%%%%%%%%%%%%%%%%%%%%%%%%%%%%%%%%%%%%%%%

\paragraph{Introduction.---}

\textcolor{black}{Rapid growth of quantum information science and technologies shows significant potential in the area of computing, communication and sensing. From the computing side, quantum computation marks a paradigm shift in processing power, leveraging quantum mechanics to vastly outperform classical computers~\cite{lloyd1996universal,harrow2009quantum,Childs2010,preskill2012quantum,Montanaro2016,preskill2023quantum,Huang2021b,Chen2024,Bravyi2020b}. On the communication front, the quantum internet represents a revolutionary step in communication technology~\cite{kimble2008quantum,azuma2023quantum,Bennett2002,Leditzky2023,Wang2019b,Li2009}. It extends the capabilities of quantum computing by enabling the transmission of quantum information over long distances~\cite{munro2015inside,muralidharan2016optimal,
wengerowsky2019entanglement,cacciapuoti2019quantum,hu2020efficient,rozpkedek2021quantum,azuma2023quantum,li2024generalized}. This advancement is set to transform secure communication through quantum key distribution and interconnected quantum computing networks, offering unbreakable encryption and enhanced computational capabilities. From the sensing side, quantum technologies might significantly enhance the capability of precision measurement and detect novel quantum phenomena \cite{degen2017quantum,giovannetti2006quantum,giovannetti2011advances}. Central to these advancements is the potential requirement for a centralized hub that could serve as a pivotal node for transmitting quantum data~\cite{liu2023data,liu2024quantum}, quantum private query \cite{giovannetti2008quantum}, blind quantum computing \cite{broadbent2009universal}, and distributed quantum sensing \cite{zhang2021distributed}.}

A pertinent inquiry that emerges in the context of quantum network optimization is the identification of protocols to generate multipartite entanglement between end-nodes with both high fidelity and efficiency. In the realm of entanglement distribution for two user-nodes, protocols encompassing both entanglement generation and entanglement swapping have been shown to be optimal for implementation across quantum networks~\cite{bose1998multiparticle,shi2000optimal,su2016quantum}. Nevertheless, it is important to recognize that the fully functional quantum network extends beyond the two-user paradigm, necessitating the exploration and utilization of multipartite entanglement allocation strategies.

Multipartite entanglement plays a pivotal role in the rapidly evolving fields of quantum networks, quantum information theory, and distributed quantum computing~\cite{Horodecki2024,Streltsov2020,Schwaiger2015,Huber2018a,Regula2018b,Navascues2020,Gour2010}. At the forefront of this fascinating area of study are the Greenberger-Horne-Zeilinger (GHZ) and W states, which serve as quintessential examples of multipartite entanglement. 
The GHZ state~\cite{greenberger1989bell,greenberger1990bell} exemplifies a maximally entangled state that involves multiple particles, highlighting the nonlocal correlations inherent in quantum systems. Its unique property of collapsing entirely upon measurement of one particle makes it an ideal candidate for precision measurements and complex quantum algorithms~\cite{lo2000classical,christandl2005quantum,komar2014quantum}.
Besides, the W state~\cite{Dur2000}, known for its robustness against qubit loss, has been shown to provide an advantage in distributed quantum algorithms such as secure communication~\cite{joo2002quantum,agrawal2006perfect,liu2011efficient,lipinska2018anonymous} and secret voting~\cite{d2004computational}. 
\update{ With their pivotal applications, it then becomes essential to devise efficient protocols for distributing these states across the quantum internet.}

\update{The efficiency of these distribution protocols can be assessed by examining both quantum memory and classical communication cost. Quantum memory cost~\cite{bisio2012memory} is primarily determined by the number of ancillary systems that must remain coherent between successive steps, directly impacting the scalability and practical implementation feasibility of a quantum network. Thus, minimizing its consumption in the protocol is crucial for optimal efficiency.} Classical communication cost, meanwhile, is also vital in this context, as it is closely related to the communication cost of distributed quantum information processing~\cite{lo2000classical, Hayden_2003, harrow2004tight}.

While mature distribution technologies can facilitate the cost-effective sharing of Bell states between two parties~\cite{dai2016generation,yang2020cooling,zhang2023scalable}, the efficient distribution of multipartite entanglements remains a critical and largely unexplored area. Current best-known protocols for accomplishing this task heavily rely on quantum teleportation~\cite{Bennett1993teleportation} as a foundational mechanism~\cite{avis2023analysis, Bugalho2023center}, which is designed to work with general quantum states and is not specifically tailored to the unique characteristics of GHZ and W states.
% \textcolor{red}{Despite the existence of protocols that utilize a central hub to distribute GHZ states or graph states, they lack thorough resource analysis~\cite{cuquet2012growth, khatri2022design}.}

Given the pivotal role of a centralized hub in facilitating key quantum information processing tasks, it is crucial to investigate the efficient distribution of GHZ and W states within this architecture~\cite{cuquet2012growth, khatri2022design}. The distribution procedure typically begins with various end nodes on the end system initially establishing bipartite entanglement with the central hub. Subsequently, through the application of local operations and classical communication (LOCC) by the central and end system, these bipartite states are transformed into a single multipartite entangled state that encompasses all end nodes.

In this paper, we present deterministic and exact protocols to distribute generalized N -qubit W states via one-way LOCC in a central hub. We further prove the optimality of the resources required by our proposed distribution protocols for the W states, as well as the previous protocols for GHZ states and graph states. From a practical perspective, our approach offers a more resource-efficient strategy in terms of memory cost and communication cost. Contrasting with the best known protocols that demand $2N$-qubit memory cost and $2N$ communication cost for sending messages~\cite{avis2023analysis}, our proposed method significantly reduces the memory requirements to only $N$ qubits in the central system. Furthermore, our results show that these protocols only needs communication cost of $N$ classical bits for the GHZ state distribution and $2N - 2$ classical bits for W state distribution, respectively. From a theoretical perspective, our protocol also highlights a fundamental inequivalence between the GHZ and W states in terms of communication costs for this operational task. 

\begin{figure}[t]
    \centering
    \includegraphics[width=0.6\linewidth]{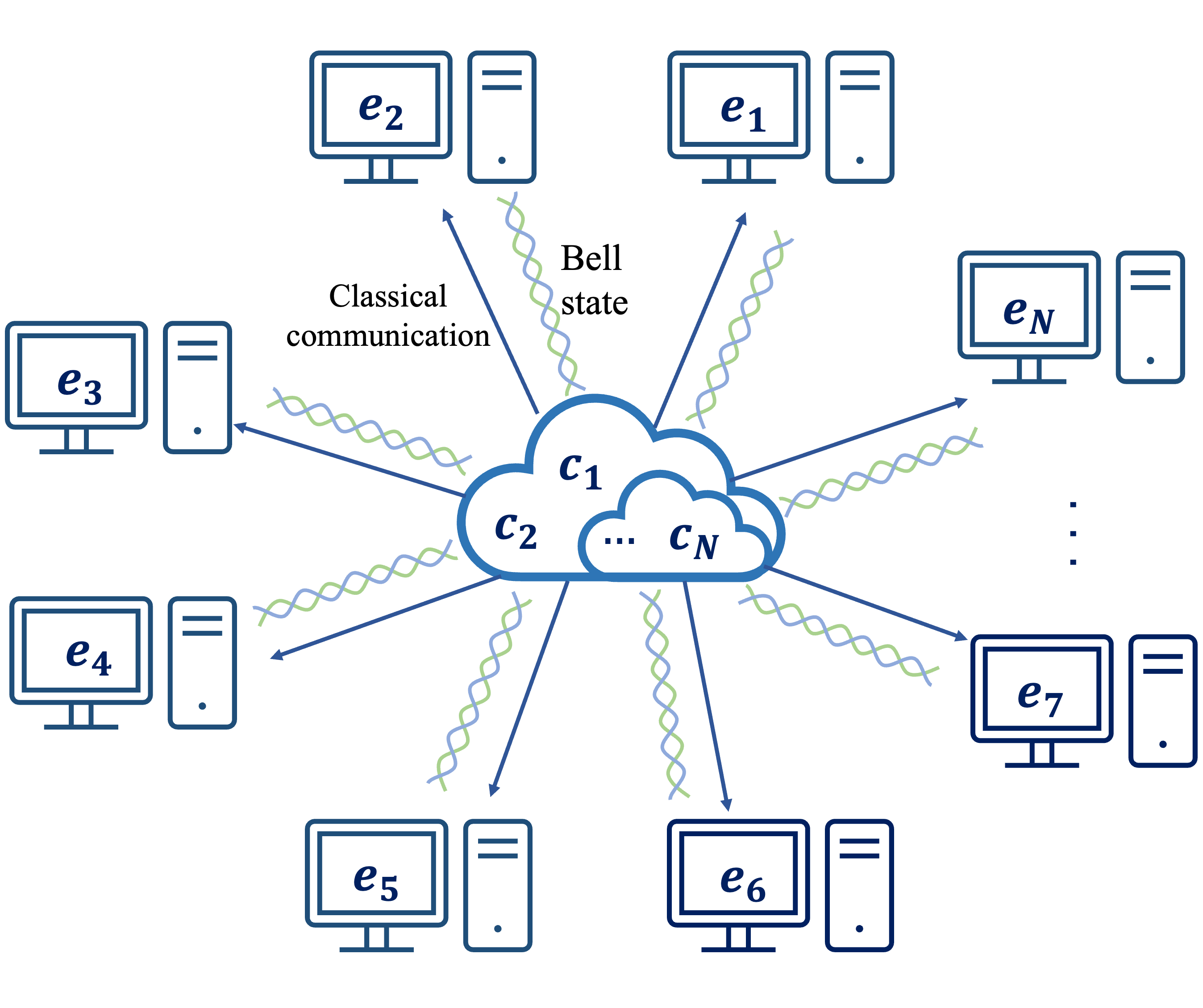}
    \caption{A schematic representation of the central hub considered in this work.
    We specifically study the multipartite entangled state distribution through a central hub using $N$ Bell states. 
     In this framework, each central node $\cnode{i}$ shares a Bell state with each corresponding end node $\enode{i}$. The central system then performs a quantum operation and sends classical information to each end node. Based on the message received, apply local operations on end system. As a result, $N$ end nodes are ultimately distributed a multipartite entangled state.}
    \label{fig:n-center}
\end{figure}

%%%%%%%%%%%%%%%%%%%%%%%%%%%%%%%%%%%%%%%%%%%%%%%%%%%%%%%%%%%%%%%%%%%%%%%%%%%
\paragraph{One-way LOCC in a central hub.---}
Consider a scenario wherein a central system $\csys = \cnode{1} \cdots \cnode{N}$ has distributed $N$ Bell states, defined as
\begin{equation}
    \bellplus = (\ket{00} + \ket{11}) / \sqrt{2}
,\end{equation}
to $N$ spatially separated end nodes $\enode{i}$, $i \in \set{1, \ldots, N}$, respectively. Due to physical hardware limitations or security concerns, the end nodes cannot establish classical communication channels with one another. However, the central system preserves the capability to send classical bits to each end node. Under this context, the primary objective of multipartite entanglement allocation is to distribute a target multipartite entangled state across these $N$ end nodes using $N$ preshared Bell states. This configuration presented in Fig.~\ref{fig:n-center} has been experimentally validated in a recent study~\cite{liu2024creation}, which demonstrated the allocation of tripartite entanglement to three spatially distant devices via a central server in metropolitan environments.

The distribution of information between multiple nodes is covered by LOCC, a communication protocol in quantum information theory.  The nodes involved in LOCC protocols are able to perform local operations on their respective systems and may exchange classical bits of information with their neighboring nodes.
Due to the constraints of classical communication in our scenario, we focus on a particular class of LOCC protocols called the one-way LOCC protocol, which has gained significant attention in recent years for its potential to efficiently distribute and process quantum information in a wide range of applications, including quantum computing, quantum communication~\cite{Bennett1993teleportation}, and quantum cryptography~\cite{pirandola2015advances,ouchao2023quantum}. The one-way LOCC protocol in a central hub can be operationally composed of the following steps. 
\begin{itemize}
    \itemsep0em
\item[(i)] $N$ entangled states are distributed jointly on each pair of nodes $\set{\cnode{i}, \enode{i}}, i = 1, \ldots, N$.
\item[(ii)] Using a projection-valued measure $\set{\cM^s}_{s \in \cS}$ labelled by a finite symbol set $\cS$, the central system performs a quantum measurement over central nodes $\cnode{1} \cdots \cnode{N}$ and obtains a measurement outcome \update{$s$}.
\item[(iii)] \update{The central system generates $N$ positive integers, $\alpha_1(s)$, $\ldots$, $\alpha_N(s)$, and sends each of these as a classical message to its corresponding end system.
\item[(iv)] Upon receiving the message, the end system $e_i$ selects and applies the $\alpha_i$-th local recovery operation $\cR^{i,\alpha_i(s)}$.}
\end{itemize}

\noindent Here the operations in step (ii-iv) can be formulated as a quantum operation so-called one-way LOCC operation, defined as
\begin{equation}
    \cL = \sum_{s \in \cS} \cM^s_c \ox \cR^{1,\alpha_1(s)}_{e_1} \ox \ldots \ox \cR^{N,\alpha_N(s)}_{e_N}
.\end{equation}

Within this framework, the central node has quantum correlations with each end node with preshared $N$ Bell states. The goal is to prepare arbitrary target multipartite entangled states in the network via LOCC between the central node and the end nodes.
Throughout this paper, the communication cost of distributing quantum state $\rho$ in a central hub is quantified by the minimum total number of classical bits sent from the central system to end nodes, mathematically defined as
\begin{equation}
C(\rho) = \log \min_{\cL} \setcond{ \prod_{i=1}^N \max_{s \in \cS} \alpha_i(s) }{ \cL(\Phi^{\ox N}) = \rho}  \label{eq:c-cost}
.\end{equation}
Given the complexities associated with practical implementation and noise factor, the projection measurement in step (ii) is considered as some unitary transformations followed by a quantum measurement in computational basis (i.e., $s \in \cS$ is a binary string of length $N$). We further assume that the recovery operation on end nodes are modeled by local unitaries.

%%%%%%%%%%%%%%%%%%%%%%%%%%%%%%%%%%%%%%%%%%%%%%%%%%%%%%%%%%%%%%%%%%%%%%%%%%%

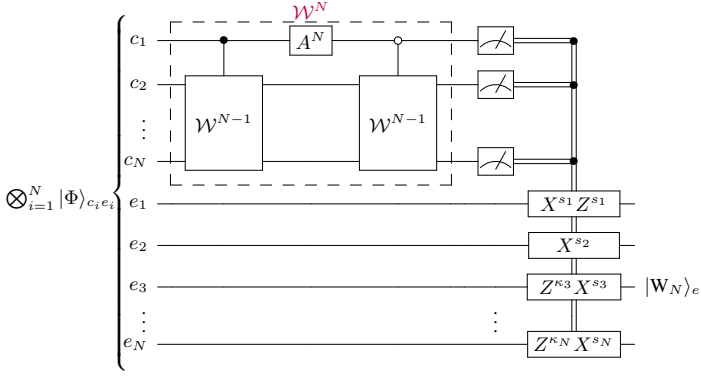
\begin{figure}[t]
\centering
\[
\resizebox{0.8\linewidth}{!}{
\Qcircuit @C=.7em @R=.8em{
    & {} & {} &{} &  \mbox{\textcolor{purple}{$\wlo{N}$}} & {} &{}  & {}  \\
    & \lstick{\cnode{1}} & \qw & \ctrl{1} & \gate{A^N}    &\ctrlo{1}& \qw & \qw &\meter& \control \cw \cwx[1]\\
    & \lstick{\cnode{2}} & \qw &\multigate{2}{\wlo{N-1}} & \qw &\multigate{2}{\wlo{N-1}} & \qw & \qw &\meter& \control \cw \cwx[2]\\
    \vdots & & &  & \nghost{U_{N-1}}  \hspace{2.5em}  & \nghost{\wlo{N-1}}  & & & & & \\
    & \lstick{\cnode{N}} & \qw &\ghost{\wlo{N-1}} & \qw  &\ghost{\wlo{N-1}} & \qw & \qw &\meter& \control \cw \cwx[1]\\
    & \lstick{\enode{1}} & \qw & \qw & \qw  & \qw  & \qw & \qw & \qw & \gate{\hspace{0.3em} X^{s_1} Z^{s_1} \hspace{0.3em}} \cwx[1] & \qw \\
    & \lstick{\enode{2}} & \qw & \qw & \qw  & \qw  & \qw & \qw & \qw & \gate{\hspace{1.1em} X^{s_2} \hspace{1.1em}} \cwx[1] & \qw  \\
    & \lstick{\enode{3}} & \qw & \qw & \qw  & \qw  & \qw & \qw & \qw & \gate{\hspace{0.3em} Z^{\kappa_3} X^{s_3} \hspace{0.3em}} \cwx[1] & \rstick{\w{N}_\esys} \qw  \\
    \vdots &  & &   &  \hspace{2.5em}  &  & & &\vdots & \cwx[1] & \\
    & \lstick{\enode{N}} & \qw & \qw & \qw  & \qw  & \qw & \qw & \qw & \gate{Z^{\kappa_N} X^{s_N}} & \qw
    \gategroup{2}{4}{5}{6}{1.5em}{--}
    \inputgroupv{2}{10}{2.5em}{8em}{\bigotimes_{i=1}^N \bellplus_{\cnode{i} \enode{i}} \hspace{5em}}}
}
\]
\caption{Protocols for distributing generalized $N$-qubit W states for $N \geq 3$. In this setting, each end node $e_i$ preshares $N$ Bell states with the central node $c_i$, and the local operation on each end node $\enode{i}$ depends on the measured outcome $s = s_1 \cdots s_N \in \set{0, 1}^{\times N}$. 
% Here $A^N = R_y \left( 2 \arccos (\sqrt{(N-1)/N}) \right)\cdot Z$.
% and the implementation of $\wlo{N}$ can be recursively constructed by $\wlo{N - 1}$ and $\kappa_k = \sum_{l=2}^{k-1} s_l \mod 2$ for $k \geq 3$. 
}~\label{fig:n-w}
\end{figure}

\paragraph{$N$-qubit W state allocation.---}
The W state is robust against qubit loss and offers many advantages in various quantum information processing tasks. The generalized $N$-qubit W state is defined as follows: 
\begin{align}
  \w{N} \coloneqq \frac{1}{\sqrt{N}}\sum_{k=1}^N X_k \ket{0}^{\ox N},  
\end{align}
where $X_k$ represents the Pauli-$X$ gate acting on the $k$-th qubit.

The earlier method of distributing W states depended on quantum teleportation, which is a general-purpose method for transmitting any pure state. Since our goal is to distribute a specific known quantum state here, it would be more efficient to use a protocol specifically designed for that state in order to further reduce the required resources. With this in mind, we introduce a more efficient one-way LOCC protocol that attains the same objective. 
Unlike teleportation, which requires $N$ preshared Bell states and a classical communication cost of $2N$ classical bits, 
our proposed protocol achieves the same result with fewer resources. Additionally, by employing the central hub in Fig.~\ref{fig:n-center}, \update{we minimize the memory cost to $N$ qubits, in contrast to the $2N$ qubits needed in other architectures,} thereby enhancing its resource efficiency.

\begin{proposition}\label{prop:n-w}
    There exists a one-way LOCC protocol that deterministically and exactly distributes an $N$-qubit W state in a central hub with $N$ preshared Bell states and an optimal classical communication cost of $2N-2$ classical bits.
\end{proposition}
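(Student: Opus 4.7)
The plan is to prove both existence and optimality of the $2N-2$-bit cost in turn.

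For existence, I would construct the protocol inductively following the circuit in Fig.~\ref{fig:n-w}. The base case $N=2$ is immediate: $\w{2} = (\ket{01}+\ket{10})/\sqrt{2}$ is locally equivalent to $\bellplus$ and is distributable at cost $2 = 2(2)-2$ via standard entanglement swapping between $\cnode{1}$ and $\cnode{2}$. For the inductive step I would exploit the decomposition
\begin{equation}
\w{N} = \tfrac{1}{\sqrt{N}}\,\ket{1}\otimes\ket{0}^{\otimes(N-1)} + \sqrt{\tfrac{N-1}{N}}\,\ket{0}\otimes\w{N-1},
\end{equation}
and define the central unitary $\wlo{N}$ as in Fig.~\ref{fig:n-w}: a $\wlo{N-1}$ on $\cnode{2}\cdots\cnode{N}$ controlled on $\cnode{1}=\ket{1}$, followed by a single-qubit rotation $A^N$ on $\cnode{1}$ whose matrix elements encode the $\sqrt{1/N},\sqrt{(N-1)/N}$ amplitude splitting, followed by a second $\wlo{N-1}$ anti-controlled on $\cnode{1}=\ket{0}$. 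Using the transpose identity $(U\otimes I)\bellplus = (I\otimes U^T)\bellplus$ on each preshared Bell pair, the central measurement in the computational basis leaves the end system in $(\wlo{N})^T\ket{s}$. Invoking the inductive hypothesis for $\wlo{N-1}$, I would verify that for every $s\in\{0,1\}^N$ the stated corrections---$X^{s_1}Z^{s_1}$ on $\enode{1}$, $X^{s_2}$ on $\enode{2}$, and $Z^{\kappa_i}X^{s_i}$ on $\enode{i}$ for $i\geq 3$---recover $\w{N}$ exactly, giving a total classical cost of $1+1+2(N-2)=2N-2$ bits.

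For optimality, I would reformulate the lower bound as a combinatorial constraint on product-unitary orbits of $\w{N}$. Any deterministic exact one-way LOCC protocol in this hub is equivalent, via the transpose identity, to a central unitary $U_c$ together with product corrections $V_s = V_s^{(1)}\otimes\cdots\otimes V_s^{(N)}$ satisfying $V_s U_c^T\ket{s}=\w{N}$ for every $s\in\{0,1\}^N$; equivalently, $\{V_s^\dagger\w{N}\}_s$ must form an orthonormal basis of the end system. Letting $d_i$ denote the number of distinct single-qubit corrections appearing on $\enode{i}$, the communication cost is $\sum_i \log d_i$. I would then establish that producing $2^N$ pairwise orthogonal images of $\w{N}$ under product unitaries forces $\prod_i d_i \geq 4^{N-1}$, matching the construction above.

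The main obstacle is this optimality bound. A naive orthogonality count yields only $\prod_i d_i \geq 2^N$, i.e., $N$ bits, so the extra factor of $2^{N-2}$ must be extracted from the rigid local structure of $\w{N}$: specifically, one must prove that at most two end nodes can have $d_i = 2$, while the remaining $N-2$ must satisfy $d_i \geq 4$. I expect to derive this through an analysis of how product unitaries transform the single-qubit reduced states of $\w{N}$, combined with the fact that $\w{N}$'s local stabilizer group is sparse (essentially diagonal phases with a global-phase constraint), unlike the Pauli-rich stabilizer of the GHZ state. Pinning down the combinatorial obstruction that forbids $d_i = 2$ at more than two positions is where the bulk of the technical work will reside.
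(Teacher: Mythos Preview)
Your existence argument is sound in spirit but takes a different route from the paper. The paper does not argue inductively; instead it writes down the closed form
\[
\wlo{N}=\frac{1}{\sqrt{N}}\sum_{r=1}^N Z^{\otimes(r-1)}\otimes X\otimes I^{\otimes(N-r)},
\]
observes that this operator is real symmetric unitary, uses the transpose identity to get the post-measurement state $U\ket{s}$, and then verifies by a direct Pauli calculation (Lemma~\ref{lem:xz-w}) that the stated product correction sends $U\ket{s}$ to $\w{N}$ for every $s$. Your inductive scheme based on the recursive circuit in Fig.~\ref{fig:n-w} would also work, but the paper's closed form makes the verification a one-line algebraic check rather than an induction with a nontrivial inductive step.

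For optimality you have correctly isolated the structural target---at most two end nodes can get away with a two-element correction set---but you do not yet have a mechanism to prove it, and the mechanism the paper uses is not the one you sketch. The paper does \emph{not} argue via local stabilizers or reduced-state transformations. It first handles $N=3$ by a direct orthogonality argument: assuming all three nodes use a single nontrivial correction $P,Q,R\in\operatorname{SU}(2)$, orthogonality of the eight states $(P^{f_1}\otimes Q^{f_2}\otimes R^{f_3})^{-1}\w{3}$ forces $P,Q,R$ to be anti-diagonal with $\tr(P^\dagger Q^\dagger)=\tr(P^\dagger R^\dagger)=\tr(Q^\dagger R^\dagger)=0$, which is impossible. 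For $N\ge4$ it assumes three nodes use only one bit each, uses a dimension-counting lemma (Lemma~\ref{lem:ortho_set}) to pin down the combinatorics on those three coordinates, and then reduces the contradiction to the matrix equation
\[
\sum_{s,k,l\in\{0,1\}}(P^{-s}\otimes Q^{-k}\otimes R^{-l})\bigl(3\ketbra{\textup{W}_3}{\textup{W}_3}+(N-3)\ketbra{000}{000}\bigr)(P^{s}\otimes Q^{k}\otimes R^{l})=NI_8,
\]
which is shown to have no $\operatorname{SU}(2)$ solutions for $N\neq2$ by a Gr\"obner basis computation (Lemma~\ref{lem:WN->3}). This computer-algebra step is the actual workhorse, and nothing in your proposal would substitute for it. Note also that the paper does not establish $d_i\ge4$ on the remaining nodes as you state; it only rules out three nodes simultaneously having $d_i\le2$, which in the integer-bit accounting yields $\ge2$ bits on each of the remaining $N-2$ nodes and hence the $2N-2$ total.
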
                                        
The existence of such a protocol can be elaborated as follows. The protocol begins by sharing $N$ copies of Bell states between the central nodes and the end nodes $\beginst=\bigotimes_{i=1}^N\ket{\Phi}_{e_ic_i}$. To obtain the desired state, one can perform the $N$-qubit operation $\wlo{N}$,
\begin{align}~\label{eq:w-un}
    \wlo{N} =\frac{1}{\sqrt{N}}\sum_{r=1}^N Z^{\ox (r-1)}\ox X \ox I^{\ox(N-r)}
\end{align}
on the central system followed by a $N$-qubit computational measurement, where $I, X, Z$ are Pauli operators. Depending on the measurement result $s = s_1 \cdots s_N$, the $k$-th end node can correspondingly perform the single qubit recovery unitary
\begin{equation}
\begin{cases}
    X^{s_k} Z^{s_k}, &\textrm{ if } k = 1; \\
    X^{s_k}, &\textrm{ if } k = 2; \\
    Z^{\kappa_k} X^{s_k}, &\textrm{ if } 3 \leq k \leq N
\end{cases}
\end{equation}
for $\kappa_k = \left(\sum_{l=2}^{k-1} s_l\right)\mod 2$. Tracing out the central system would finally obtain the recovered state $\w{N}$, across $N$ end nodes. 
Note that the implementation of $\wlo{N}$ can be constructed recursively by $\wlo{N - 1}$ and one rotation gate $A^N = R_y ( 2 \arccos (\sqrt{(N-1)/N}) )\cdot Z$, with the base case $\wlo{2} = (X \ox I + Z \ox X) / \sqrt{2}$.
We summarize this protocol in Fig.~\ref{fig:n-w}.

Based on the expression of the recovery operation, we observe that except the first end node requires $s_1$, the second end node requires $s_2$, and each subsequent $k$-th $(k\geq 3)$ end node requires both $\kappa_k$ and $s_k$. 
Consequently, the communication cost amounts to $2N-2$ classical bits. The detailed proof of Proposition~\ref{prop:n-w} is provided in the Supplementary Material. \textcolor{black}{Specifically, because of the recovery operation, we expect that such a one-way LOCC protocol might be closely connected to the theory of quantum error correction and detection, similar to algorithms like quantum error filtration that bridges error suppression with quantum communication \cite{gisin2005error,lee2023error}. We leave potential connections between quantum error correction and our LOCC protocol for future research. }

We confirm that the classical communication cost required to allocate the $N$ -qubit W state in a central hub is indeed optimal, as described in the protocol outlined in Proposition~\ref{prop:n-w}, regardless of the operations performed by the central and end systems of the central hub. In fact, for a fixed unitary $\wlo{N}$ in Fig.~\ref{fig:n-w}, there exist several optional local operations that can be performed on the subsystems $\{e_i\}_{i=1}^N$ to obtain $\w{N}$ through classical communication. Likewise, for other optional unitaries $\wlo{N}$, there are also multiple possible local operations that can achieve $N$-qubit $W$ states based on classical communication. However, our protocol establishes that, in both of these scenarios, the optimal classical communication cost to distribute an $N$-qubit W state in a central hub framework is $2N-2$ classical bits. This discovery confirms the optimality of the proposed protocol for the $W$ state distribution in terms of classical communication resources. A detailed proof is provided in the Supplemental Material.

\paragraph{$N$-qubit GHZ states allocation.---}
The GHZ state is another important resource in quantum information processing. The generalized $N$-qubit GHZ state is defined as follows,
\begin{align}
   \ghz{N} \coloneqq \frac{1}{\sqrt{2}}(\ket{0}^{\ox N} + \ket{1}^{\ox N}) .
\end{align}
Refs.~\cite{cuquet2012growth,khatri2022design} have established that their allocation protocols necessitate $N$ preshared Bell states and involve a communication cost of $N$ classical bits. Our analysis in Proposition~\ref{prop:n-ghz} can further show the optimality of these protocols concerning resource efficiency. For completeness, we provide an outline of the one-way LOCC protocol utilizing a central hub below.

\begin{proposition}\label{prop:n-ghz}
There exists a one-way LOCC protocol that deterministically and exactly distributes an $N$-qubit GHZ state in a central hub with $N$ preshared Bell states and an optimal classical communication cost of $N$ classical bits.
\end{proposition}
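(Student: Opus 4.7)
The plan is to prove the proposition in two parts: construct a protocol that attains the claimed cost of $N$ classical bits, and then establish a matching lower bound.

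For the constructive part, I would use the standard GHZ-distillation circuit underlying Refs.~\cite{cuquet2012growth,khatri2022design}. Starting from $\bigotimes_{i=1}^N \bellplus_{c_i e_i}$, the central system applies CNOTs with $c_1$ as control and $c_2, \ldots, c_N$ as targets, a Hadamard on $c_1$, and then a computational-basis measurement on all central qubits. A direct calculation shows that conditioned on outcome $s = s_1 \cdots s_N$, the end-system state is $\left(Z^{s_1}_{e_1} X^{s_2}_{e_2} \cdots X^{s_N}_{e_N}\right) \ghz{N}$, so sending the single bit $s_i$ to $\enode{i}$ and applying the indicated Pauli recovers $\ghz{N}$ deterministically with $N$ classical bits transmitted in total.

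For the lower bound, the argument will mirror the optimality proof of Proposition~\ref{prop:n-w}. In any one-way LOCC protocol, the central measurement can be written as a unitary $U$ on $\cnode{1}\cdots\cnode{N}$ followed by a computational-basis measurement. Using the Bell-pair identity $(A\otimes I)\bellplus = (I\otimes A^T)\bellplus$ on each $(c_i,e_i)$ pair, the post-measurement state on the end system for outcome $s \in \{0,1\}^N$ is $\ket{\psi^s}_e = (U^T\ket{s})_{e_1\cdots e_N}$, and $\{\ket{\psi^s}\}_s$ is an orthonormal basis of $(\CC^2)^{\otimes N}$. Determinism and exactness then force the product unitary $V^s = V_1^{\alpha_1(s)} \otimes \cdots \otimes V_N^{\alpha_N(s)}$ to satisfy $V^s\ket{\psi^s} = \ghz{N}$ for every $s$. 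The crux is a one-line injectivity argument: if $V^s$ and $V^{s'}$ defined the same quantum operation, i.e.\ agreed up to a global phase, then $\ket{\psi^{s'}} = (V^{s'})^\dagger\ghz{N}$ would be proportional to $(V^s)^\dagger\ghz{N} = \ket{\psi^s}$, contradicting orthonormality for $s \neq s'$. Hence the $2^N$ product unitaries $\{V^s\}_s$ are pairwise distinct as operations, so the menu size $M_i$ of end node $i$ obeys $\prod_{i=1}^N M_i \geq 2^N$, and combining with $\max_s\alpha_i(s) \geq M_i$ gives $C(\ghz{N}) \geq N$.

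The main obstacle, as I anticipate, will be the bookkeeping rather than any single calculation: carefully translating the operational one-way LOCC description into the post-measurement basis $\{\ket{\psi^s}\}_s$ via the Bell-pair transpose trick, and relating the operator-theoretic count $\prod_i M_i$ (modulo global phase) to the paper's combinatorial definition $\prod_i \max_s \alpha_i(s)$. Unlike the W-state case, where the bare counting yields only an $N$-bit lower bound and a sharper argument is needed to reach $2N-2$, for GHZ this basic orthogonality-plus-counting bound already matches the constructive protocol, so no additional refinement is required.
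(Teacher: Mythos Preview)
Your proposal is correct and follows essentially the same approach as the paper. The constructive protocol (CNOTs from $c_1$, Hadamard, computational measurement, Pauli recoveries) is exactly the paper's $\ghzlo{N}$, and your lower-bound argument---orthonormality of $\{U^T\ket{s}\}_s$ forces the $2^N$ recovery tuples $(\alpha_1(s),\ldots,\alpha_N(s))$ to be pairwise distinct, hence $\prod_i\max_s\alpha_i(s)\ge 2^N$---is precisely the content of the paper's Proposition~S3, just spelled out in more detail than the paper's one-line version.
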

 
Similarly, we present a concrete protocol for proving the case of GHZ states allocation, which is analogous to the proof of Proposition~\ref{prop:n-w}. To acquire the recovered GHZ state $\ghz{N}$, the protocol involves applying the operation 
\begin{align}
  \ghzlo{N}= \left(H \ox I^{\ox(N-1)}\right) \cdot 
  \begin{pmatrix}
      I^{\ox (N-1)} & 0\\
      0 & X^{ \ox (N-1)}
  \end{pmatrix}
\end{align}
to the central system followed by a computational measurement, after the central and final nodes share N copies of Bell states, where $H$ is the Hadamard gate. The corresponding recovery operation requires the application of a Pauli-$Z$ gate to the first end node and a Pauli-$X$ gate to the remaining end nodes.

Building upon the analysis of W states, \update{the protocol for $N$-qubit GHZ states can be implemented with only $N$-qubit quantum memory cost in the central system.} To achieve the allocation of $N$-qubit GHZ states, we initially require $N$ preshared Bell states. Each end node then requires $s_i$ to determine the operation on $e_i$, as specified by the recovery operation $\recover{s} = Z^{s_1}\ox  \bigotimes_{k=2}^{N} X^{s_k} $. Consequently, the communication cost for this allocation amounts to $N$ classical bits. We leave the detailed proof of Proposition~\ref{prop:n-ghz} in the Supplementary Material.

As we have previously demonstrated, the protocol we proposed for the W state distribution is optimal in terms of classical communication resources. Additionally, we have discovered that the lower bound for the necessary classical bits when distributing any $N$-qubit pure state using centralized hub entanglement allocation protocol is $N$ bits, inspired by Proposition~\ref{prop:n-ghz}. This implies that regardless of the chosen unitary and local operations acting on $\{c_i\}_{i=1}^N$ and $\{e_i\}_{i=1}^N$, the classical bits required to distribute an arbitrary pure state cannot be less than $N$. As a result, the protocol described in Proposition~\ref{prop:n-ghz}, which has a communication cost of $N$ classical bits, can be considered optimal. This confirms the optimality of the communication cost required for the proposed GHZ state allocation protocol in a central hub. The detailed proof can be found in the Supplemental Material.

%%%%%%%%%%%%%%%%%%%%%%%%%%%%%%%%%%%%%%%%%%%%%%%%%%%%%%%%%%%%%%%%%%%%%%%%%%%
\paragraph{Discussion.---}
In this work, we investigate the protocols for distributing $N$-qubit generalized W and GHZ states across centralized hub architecture. Our protocols significantly reduce the resource requirements, with $2N-2$ communication cost needed for allocating $N$-qubit W states and $N$ communication cost for $N$-qubit GHZ states. Ref.~\cite{khatri2022design} presents protocols for allocating any $N$-qubit graph state, and we can further demonstrated that it attains the optimal resource requirements in the Supplementary Material. These results are summarized in Table~\ref{tab:resource}. The minimized resource usage and enhanced allocation efficiency of multipartite entangled states could open up promising opportunities for various applications, including quantum communication protocols, secure quantum key distribution, and quantum computing.

\begin{table}[h]
\centering
\caption{{Resource cost based on centralized hub entanglement allocation protocol, where $C(\rho)$ represents the classical communication cost defined in Eq.~\eqref{eq:c-cost}.} } 
\label{tab:resource}
\setlength{\tabcolsep}{1em}
\begin{tabular}{lccc}
\toprule
Allocated states & $C(\rho)$ &  \update{Memory cost} \\
\midrule
\addlinespace
$N$-qubit W states & $2N-2$ & {$N$}\\
\addlinespace
$N$-qubit GHZ states & $N$ & $N$\\
\addlinespace
$N$-qubit graph states & $N$ & $N$\\
\addlinespace
$N$-qubit pure states & $\geq N$ & {$N$} \\ 
\addlinespace
\bottomrule
\end{tabular}
\end{table}

It is worth noting that the circuit before measurement in Fig.~\ref{fig:n-w} can be further optimized using linear combination of unitaries~\cite{childs2012hamiltonian} and amplitude amplification~\cite{brassard1997exact,grover1998quantum} based on quantum singular value transformation techniques~\cite{gilyen2019quantum, martyn2021grand,Wang2022}. On the other hand, there also exist other different unitaries with shallow circuits at the expense of consuming more ancilla qubits. A detailed description of this method is provided in the Supplemental Material. We also remark that the protocols we discussed can achieve fidelity 1 under noiseless conditions. When encountering the noise situation, it is required to use entanglement distillation protocols on the user side. We refer~\cite{Dur2003purification} and~\cite{miguelramiro2023quantum} for purification of GHZ and W states, respectively. It will be interesting to explore machine learning enhanced LOCC protocols~\cite{Zhao2021} for distillation, dilution, and channel simulation protocols~\cite{Bennett1996,Fang2017,Wang2018g,Berta2015e} for multipartite entanglement.
{Ultimately, our findings introduce new methods for distributing quantum entanglement across centralized nodes, which could have significant applications in areas such as distributed quantum error correction \cite{xu2022distributed}, distributed quantum computing \cite{liu2023data,liu2024quantum}, and distributed quantum sensing tasks like quantum telescopes \cite{gottesman2012longer} within the context of large-scale quantum networks. }

\paragraph{Acknowledgements.---}
Y.-A. Chen and X. Liu contributed equally to this work.
We would like to thank Sumeet Khatri, Benchi Zhao, Yin Mo, Yu Gan, Hongshun Yao and Keming He for their helpful suggestions.
This work was supported by the National Key R\&D Program of China (Grant No.~2024YFE0102500), the Guangdong Provincial Quantum Science Strategic Initiative (Grant No.~GDZX2303007), the Guangdong Provincial Key Lab of Integrated Communication, Sensing and Computation for Ubiquitous Internet of Things (Grant No.~2023B1212010007),  the Start-up Fund (Grant No.~G0101000151) from HKUST (Guangzhou), the Quantum Science Center of Guangdong-Hong Kong-Macao Greater Bay Area, and the Education Bureau of Guangzhou Municipality.
%%%%%%%%%%%%%%%%%%%%%%%%%%%%%%%%%%%%%%%%%%%%%%%%%%%%%%%%%%%%%%%%%%%%%%%%%%%

%%%%%%%%%%%%%%%%%%%%%%%%%%%%%%%%%%%%%%%%%%%%%%%%%%%%%%%%%%%%

%%%%%%%%%%%%%%%%%%%%%%%%%%%%%%%%%%%%%%%%%%%%%%%%%%%%%%%%%%%%%%%%%%%%%%%%%
% \bibliographystyle{alpha}
\bibliography{references}

% %%%%%%%%%%%%%%%%%%%%%%%%%%%%%%%%%%%%%%%%%%%%%%%%%%%%%%%%%%%%%%%%%%%%%%%%%
% % Supplementary Material

\clearpage
\appendix

\vspace{3cm}
\onecolumngrid
\vspace{2cm}

\begin{center}
\Large{\textbf{Supplementary Material}}
\end{center}

\renewcommand{\theequation}{S\arabic{equation}}
% \numberwithin{equation}{section}
\numberwithin{figure}{section}
\renewcommand{\thesubsection}{\normalsize{Supplementary Note \arabic{subsection}}}
\renewcommand{\theproposition}{S\arabic{proposition}}
\renewcommand{\thedefinition}{S\arabic{definition}}
\renewcommand{\thefigure}{S\arabic{figure}}
\setcounter{equation}{0}
\setcounter{table}{0}
\setcounter{section}{0}
\setcounter{proposition}{0}
\setcounter{definition}{0}
\setcounter{figure}{0}
%%%%%%%%%%%%%%%%%%%%%%%%%%%%%%%%%%%%%%%%%%%%%%%%%%%%%%%%%%%%%%%%%%%%%%%%%%%

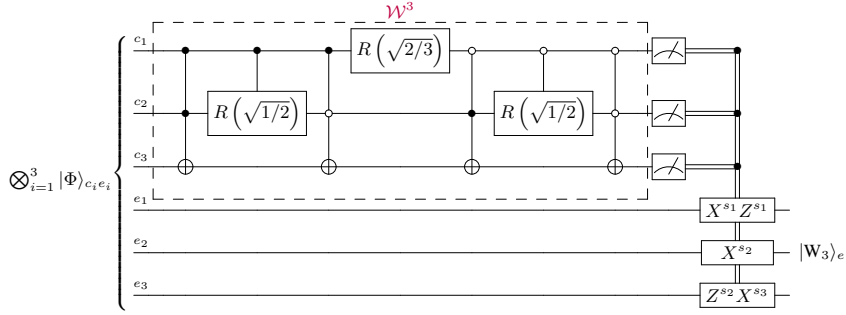
\begin{figure}[htbp]
\[
\resizebox{0.5\textwidth}{!}{
    \Qcircuit @C=0.8em @R=1em{
    & {} & {} & {} & {} &{}  & \mbox{\textcolor{purple}{$\wlo{3}$}} & {} & {} & {} \\
    & \qw_{\cnode{1}} & \qw &\ctrl{1} &\ctrl{1}  &\ctrl{1}  & \gate{R\left(\sqrt{2/3}\right)} &\ctrlo{1} &\ctrlo{1} & \ctrlo{1} & \qw  & \meter & \control \cw \cwx[1] \\
    & \qw_{\cnode{2}} & \qw &\ctrl{1} & \gate{R\left(\sqrt{1/2}\right)} &\ctrlo{1} & \qw  &\ctrl{1} & \gate{R\left(\sqrt{1/2}\right)} &\ctrlo{1} & \qw & \meter & \control \cw \cwx[1] \\
    & \qw_{\cnode{3}} & \qw &\targ & \qw  &\targ  & \qw &\targ & \qw & \targ & \qw  & \meter & \control \cw \cwx[1] \\
    & \qw_{\enode{1}} & \qw & \qw & \qw  & \qw  & \qw & \qw & \qw & \qw & \qw  & \qw & \gate{X^{s_1} Z^{s_1}} \cwx[1] & \qw \\
    & \qw_{\enode{2}} & \qw & \qw & \qw  & \qw  & \qw & \qw & \qw & \qw & \qw & \qw & \gate{\hspace{0.75em} X^{s_2} \hspace{0.75em}} \cwx[1] & \rstick{\w{3}_e} \qw  \\
    & \qw_{\enode{3}} & \qw & \qw & \qw  & \qw  & \qw & \qw & \qw & \qw & \qw & \qw & \gate{Z^{s_2} X^{s_3}} & \qw
    \gategroup{2}{4}{4}{10}{2.7em}{--}
    \inputgroupv{2}{7}{1.5em}{7em}{\bigotimes_{i=1}^3 \bellplus_{\cnode{i} \enode{i}} \hspace{4.5em}}
    % 1.5em is the spacing for { ; 4.5em is the spacing for the state
    }
}
\]
    \caption{Protocol for exact three-qubit W state distribution, using $3$ copies of a Bell state, where $R(x) \coloneqq R_y \left( 2 \arccos (x) \right)\cdot Z$. The local operation on each end node $\enode{i}$ depends on the measured outcome $s$, where $s= s_1 s_2 s_3 \in \set{0, 1}^{\times 3}$.}
    \label{fig:3-w}
\end{figure}

\section{Proof of the propositions about W states}\label{app:proof-w}

As a warm-up example, we first consider the case of $3$-qubit W states. Here we find a specific protocol for distributing 3-qubit W states, which can be implemented in three steps as follows (see Fig.~\ref{fig:3-w}).
\begin{enumerate}
    \item \textbf{Measure.} Evolve the state $\beginst = \bellplus_{\cnode{1} \enode{1}}\bellplus_{\cnode{2} \enode{2}}\bellplus_{\cnode{3} \enode{3}}$ on the central system $\csys = \cnode{1} \cnode{2} \cnode{3}$ by the following unitary
    \begin{align}\label{eq:w-u3}
       \wlo{3} = \frac{1}{\sqrt{3}} \bigl( 
       &X_{\cnode{1}} \ox I_{\cnode{2}} \ox I_{\cnode{3}} +Z_{\cnode{1}} \ox X_{\cnode{2}}  \ox I_{\cnode{3}} +Z_{\cnode{1}} \ox Z_{\cnode{2}} \ox X_{\cnode{3}} \bigr).\end{align}
    \item \textbf{Communicate.} Perform a two-outcome computational measurement 
    on each subsystem $c_i$ corresponding to the results ``0" and ``1", respectively. Then send the classical bits $s_1$, $s_2$, and $s_2$ with $s_3$ to three end nodes, respectively.
    
    \item \textbf{Recover.} Perform the local operation 
    \begin{align}
    \recover{s} = X^{s_1}Z^{s_1} \ox X^{s_2} \ox Z^{s_2} X^{s_3} 
    \end{align}
on the end system $\{e_i\}_{i=1}^3$ based on the received message.
\end{enumerate}

The protocol depicted in Fig.~\ref{fig:3-w} successfully allocates a $3$-qubit W state using 3 preshared Bell states and a communication cost of 4 classical bits. We summarize the above results into the following proposition with rigorous proof.

\begin{proposition}\label{lem:3-w}
 There exists a one-way LOCC protocol that deterministically and exactly distributes a $3$-qubit W state in a central hub with $3$ preshared Bell states and a classical communication cost of $4$ classical bits.
\end{proposition}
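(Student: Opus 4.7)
My plan is to verify the protocol in Fig.~\ref{fig:3-w} step by step: I will track the joint state through the three stages (central unitary, measurement, local recovery) and show that for every measurement outcome $s\in\set{0,1}^3$ the recovery produces $\w{3}$ exactly. After rearranging the tensor factors so that the central system is listed first, the initial state reads
\begin{align*}
\beginst \;=\; \bigotimes_{i=1}^3 \bellplus_{\cnode{i}\enode{i}} \;=\; \frac{1}{2\sqrt{2}}\sum_{x\in\set{0,1}^3}\ket{x}_{\csys}\ket{x}_{\esys}.
\end{align*}
By the standard Bell-state ricochet identity (applied jointly to the three pairs), acting by $\wlo{3}$ on the central side is equivalent to acting by $(\wlo{3})^T$ on the end side. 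Hence a computational-basis measurement on $\csys$ with outcome $s$ leaves the end system in the normalized state $(\wlo{3})^T\ket{s}_{\esys}$, and each outcome occurs with probability $1/8$.

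Next, I verify that $\wlo{3}$ is well-defined as a unitary and equals its own transpose. Writing $\wlo{3}=\tfrac{1}{\sqrt{3}}(T_1+T_2+T_3)$ with $T_1=X\ox I\ox I$, $T_2=Z\ox X\ox I$, $T_3=Z\ox Z\ox X$, each $T_j$ is a real symmetric involution, and any pair $T_j, T_k$ with $j\ne k$ anticommutes (the relation $XZ=-ZX$ fires on the unique slot where one string carries $X$ and the other carries $Z$, while all remaining slots commute). Therefore $(\wlo{3})^2 = \tfrac{1}{3}(T_1^2+T_2^2+T_3^2) = I$, so $\wlo{3}$ is a Hermitian involution (in particular unitary), and it is real symmetric, so $(\wlo{3})^T = \wlo{3}$. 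The post-measurement end state is thus $\wlo{3}\ket{s}$.

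The core computation is a direct expansion of this state,
\begin{align*}
\wlo{3}\ket{s_1 s_2 s_3} = \frac{1}{\sqrt{3}}\Bigl[\,\ket{\bar{s}_1 s_2 s_3} + (-1)^{s_1}\ket{s_1 \bar{s}_2 s_3} + (-1)^{s_1+s_2}\ket{s_1 s_2 \bar{s}_3}\,\Bigr],
\end{align*}
followed by applying $V_s = X^{s_1}Z^{s_1}\ox X^{s_2}\ox Z^{s_2}X^{s_3}$ summand by summand. The $X^{s_i}$ factors flip each $s_i$ back to $0$, so the three summands become $\ket{100}$, $\ket{010}$ and $\ket{001}$, while the $Z^{s_1}$ and $Z^{s_2}$ factors are placed precisely to cancel the phases $(-1)^{s_1}$ and $(-1)^{s_1+s_2}$ in the second and third summands. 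This yields $V_s\,\wlo{3}\ket{s} = \w{3}$ for every $s$, so the protocol is deterministic and exact.

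Finally, the classical communication cost follows by inspection of $V_s$: end node $\enode{1}$ needs only $s_1$, $\enode{2}$ needs only $s_2$, and $\enode{3}$ needs both $s_2$ and $s_3$, giving $1+1+2 = 4$ classical bits in total. The main obstacle --- though mild --- is the sign bookkeeping in the recovery step: one must design $V_s$ so that the $Z$-phases generated by $T_2$ and $T_3$ on the central qubits are absorbed on the appropriate end qubits, which is precisely why $\enode{3}$ (rather than $\enode{2}$) inherits the $Z^{s_2}$ factor. Once this pattern is identified, the remainder reduces to routine single-qubit algebra.
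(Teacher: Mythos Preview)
Your proof is correct and follows essentially the same approach as the paper: both use the Bell-state ricochet to reduce the post-measurement end state to $(\wlo{3})^T\ket{s}=\wlo{3}\ket{s}$, expand it into the three flipped-bit terms with phases $(-1)^{s_1}$ and $(-1)^{s_1+s_2}$, and then verify that $\recover{s}=X^{s_1}Z^{s_1}\ox X^{s_2}\ox Z^{s_2}X^{s_3}$ cancels all phases and maps each term to the corresponding $\w{3}$ summand. The only addition you make beyond the paper is the explicit check that $\wlo{3}$ is a Hermitian involution via pairwise anticommutation of $T_1,T_2,T_3$, which the paper takes for granted.
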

\begin{proof}
    For conveniences, denote $\beginst$ as the input state of systems $\csys, \esys$, and $U \coloneqq \wlo{3}$ as the local unitary acting on the central system $\csys$ in Fig.~\ref{fig:3-w}. Note that $U = \frac{1}{\sqrt{3}} \left(X \ox I \ox I + Z \ox X  \ox I + Z \ox Z \ox X \right)$ is symmetric. Then after receiving the measurement outcome $s = s_1 s_2 s_3$ on the central system, the corresponding output state $\finalst{s}$ is
\begin{align}
    \finalst{s} \coloneqq& \frac{P_s \left( U_\csys \ox I_\esys \right) \beginst }{ \norm{P_s \left( U_\csys \ox I_\esys \right) \beginst} } \textrm{, where } P_s = \bra{s}_\csys \ox I_\esys  \\
    =& \left( \bra{s}_\csys U_\csys \ox I_\esys \right) \sum_i \ket{i}_\csys \ox \ket{i}_\esys 
    = \sum_i \bra{s} U \ket{i} \ket{i} = \sum_i \bra{i} U^T \ket{s} \ket{i} = U^T \ket{s} \\
    =& \frac{1}{\sqrt{3}} \left[ \left(X \ket{s_1}\right) \ket{s_2} \ket{s_3} + (-1)^{s_1} \ket{s_1} \left(X \ket{s_2}\right) \ket{s_3} + (-1)^{s_1 + s_2} \ket{s_1} \ket{s_2} \left(X \ket{s_3}\right) \right]
.\end{align}
    At the last step, from the measurement result, one can correspondingly perform the recovery unitary $\recover{s} \coloneqq X^{s_1}Z^{s_1} \ox X^{s_2} \ox Z^{s_2} X^{s_3} $ on the end system, such that
\begin{alignat}{2}
    \recover{s} \finalst{s} &= \frac{1}{\sqrt{3}} [&&  X^{s_1} Z^{s_1} X \ket{s_1} \ox  X^{s_2} \ket{s_2} \ox  Z^{s_2} X^{s_3} \ket{s_3} + \\
    &&& (-1)^{s_1} X^{s_1}Z^{s_1} \ket{s_1} \ox X^{s_2 + 1} \ket{s_2} \ox Z^{s_2} X^{s_3} \ket{s_3} + \\
    &&& (-1)^{s_1 + s_2} X^{s_1}Z^{s_1} \ket{s_1} \ox X^{s_2} \ket{s_2} \ox Z^{s_2} X^{s_3} X \ket{s_3}] \\
    &= \frac{1}{\sqrt{3}} [&& (-1)^{s_1 + s_1^2} \ket{1} \ox \ket{0} \ox  \ket{0} + \\
    &&& (-1)^{s_1 + s_1^2} \ket{0} \ox \ket{1} \ox \ket{0} + \\
    &&& (-1)^{s_1 + s_2 + s_1^2} \ket{0} \ox \ket{0} \ox (-1)^{s_2} \ket{1}] \\
    &= \frac{1}{\sqrt{3}} [&& \ket{1} \ox \ket{0} \ox \ket{0} + \ket{0} \ox \ket{1} \ox \ket{0} + (-1)^{2s_2} \ket{0} \ox \ket{0} \ox \ket{1}] 
    = \w{3}
.\end{alignat}
We conclude that for all measurement outcome $s$, this protocol can exactly recover the output state of end system to $\w{3}$.

Based on the expression of the local operation $\recover{s}$, we observe that the first end node requires $s_1$, the second end node requires $s_2$, and the last end node requires both $s_2$ and $s_3$. Consequently, the communication cost amounts to $4$ classical bits.

\end{proof}

For the case of $N\geq 3$, Proposition~\ref{lem:3-w} can be extended to the case of $N$-qubit W states. We give the following lemma to assist the proof of Proposition~\ref{prop:n-w}.

\begin{lemma}\label{lem:xz-w}
For $s\in\{0,1\}$, we find 
\begin{align}
    &&(XZ)^sX\ket{s}=\ket{1},&&(XZ)^sZ\ket{s}=\ket{0};\\
    &&X^sI\ket{s}=\ket{0},&&X^sX\ket{s}=\ket{1}, &&X^sZ\ket{s}=(-1)^s\ket{0}.
\end{align}
\end{lemma}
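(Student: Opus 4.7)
The plan is to verify each of the five stated identities by a direct two-case check on $s \in \{0,1\}$, since the exponents $s$ and $s$ in $(XZ)^s$ and $X^s$ make the operators collapse to either the identity or a single Pauli product. Because all five identities are single-qubit statements about Pauli operators acting on the computational basis, this reduces to the elementary facts $X\ket{0}=\ket{1}$, $X\ket{1}=\ket{0}$, $Z\ket{0}=\ket{0}$, $Z\ket{1}=-\ket{1}$, together with $I\ket{s}=\ket{s}$.

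First I would dispatch the case $s=0$: here $(XZ)^0 = X^0 = I$, so the left-hand sides are just $X\ket{0}$, $Z\ket{0}$, $I\ket{0}$, $X\ket{0}$, and $Z\ket{0}$, which equal $\ket{1}$, $\ket{0}$, $\ket{0}$, $\ket{1}$, and $\ket{0}=(-1)^0\ket{0}$ respectively, matching each right-hand side.

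Next I would handle $s=1$ by expanding $(XZ)\ket{\cdot}$ and $X\ket{\cdot}$ explicitly. For the first pair I would compute
\begin{align}
(XZ)X\ket{1} &= XZ\ket{0} = X\ket{0} = \ket{1}, \\
(XZ)Z\ket{1} &= XZ(-\ket{1}) = -X(-\ket{1}) = X\ket{1} = \ket{0},
\end{align}
and for the second batch
\begin{align}
X\cdot I\ket{1} &= \ket{0}, \qquad
X\cdot X\ket{1} = \ket{1}, \qquad
X\cdot Z\ket{1} = X(-\ket{1}) = -\ket{0} = (-1)^1 \ket{0}.
\end{align}
Combining both cases yields all five claimed identities.

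There is no real obstacle here: the lemma is a compact bookkeeping statement used to simplify the per-outcome verification of the $N$-qubit W-state protocol in the proof of Proposition~\ref{prop:n-w}, and its content is entirely within the single-qubit Clifford arithmetic already used in the warm-up $N=3$ proof. The only minor care needed is to track the sign arising from $Z\ket{1} = -\ket{1}$ so that the factor $(-1)^s$ in the last identity is reported correctly; everything else is immediate.
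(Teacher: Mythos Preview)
Your proposal is correct and is essentially the same direct-verification argument as the paper's own proof. The only cosmetic difference is that the paper keeps $s$ symbolic and writes each line as a single computation valid for both values (e.g.\ $(XZ)^sX\ket{s}=(XZ)^s\ket{1-s}=(-1)^{s(1-s)}\ket{1}=\ket{1}$), whereas you split into the two cases $s=0$ and $s=1$; the content is identical.
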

\begin{proof}
It is checked that
\begin{align}
    &(XZ)^sX\ket{s}=(XZ)^s\ket{1-s}=(-1)^{s(1-s)}\ket{1}=\ket{1},\\
    &(XZ)^sZ\ket{s}=(-1)^{s}(XZ)^s\ket{s}=(-1)^{2s}\ket{0}=\ket{0},\\
    &X^sI\ket{s}=\ket{0},\\
    &X^sX\ket{s}=\ket{1},\\
    &X^sZ\ket{s}=(-1)^s X^s\ket{s}=(-1)^s\ket{0}.
\end{align}
\end{proof}

\renewcommand\theproposition{\ref{prop:n-w}}
\setcounter{proposition}{\arabic{proposition}-1}
\begin{proposition}[Protocol of W states allocation]
There exists a one-way LOCC protocol that deterministically and exactly distributes an $N$-qubit W state in a central hub with $N$ preshared Bell states and a classical communication cost of $2N-2$ classical bits.
\end{proposition}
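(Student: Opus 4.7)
My plan is to extend the direct verification used for the three-qubit case in Proposition~\ref{lem:3-w} to general $N$ by explicitly computing the post-measurement state on the end system and then simplifying the recovery using Lemma~\ref{lem:xz-w}. The structural observation I would lean on is that each summand $Z^{\ox(r-1)} \ox X \ox I^{\ox(N-r)}$ of $\wlo{N}$ is a tensor product of symmetric single-qubit operators, so $\wlo{N}^T = \wlo{N}$. Applying the standard transpose trick
\begin{equation}
    \bra{s}_\csys \bigl( U_\csys \ox I_\esys \bigr) \bigotimes_{i=1}^N \bellplus_{\cnode{i}\enode{i}} \propto U^T \ket{s}_\esys,
\end{equation}
the (normalized) post-measurement end-system state conditioned on outcome $s = s_1 \cdots s_N$ becomes
\begin{equation}
    \wlo{N}\ket{s} = \frac{1}{\sqrt{N}} \sum_{r=1}^{N} (-1)^{\sum_{j<r} s_j}\, \ket{s_1 \cdots s_{r-1}\, \bar{s}_r\, s_{r+1} \cdots s_N},
\end{equation}
where $\bar{s}_r \coloneqq 1-s_r$, by using $Z\ket{s_j} = (-1)^{s_j}\ket{s_j}$ and $X\ket{s_r} = \ket{\bar{s}_r}$.

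Next, I would apply the recovery $\recover{s} = X^{s_1}Z^{s_1} \ox X^{s_2} \ox \bigotimes_{k=3}^N Z^{\kappa_k} X^{s_k}$ to each summand qubit by qubit. In the $r$-th summand, the identities in Lemma~\ref{lem:xz-w} show that the $k$-th qubit is sent to $\ket{1}$ when $k=r$ and to $\ket{0}$ (possibly with a scalar) when $k\neq r$, so the term collapses to a scalar multiple of $\ket{0^{r-1} 1 0^{N-r}}$. Combining the $Z$-string phase above with the phases that Lemma~\ref{lem:xz-w} injects from $X^{s_1}Z^{s_1}$ on qubit $1$ and from $Z^{\kappa_m}$ on qubit $m$, I would verify that the total sign reduces to $(-1)^{E_r}$ with $E_1 = 0$, $E_2 = 2s_1 \equiv 0$, and for $m \geq 3$
\begin{equation}
    E_m \equiv \sum_{j<m} s_j + s_1 + \kappa_m \equiv 2\kappa_m \equiv 0 \pmod{2},
\end{equation}
which is precisely why the protocol defines $\kappa_m = \bigl(\sum_{l=2}^{m-1} s_l\bigr) \bmod 2$. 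Summing over $r$ then yields $\recover{s}\wlo{N}\ket{s} = \w{N}$ for every outcome $s$, establishing that the protocol is both deterministic and exact.

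Finally, I would count the classical bits: end node $\enode{1}$ needs only $s_1$; end node $\enode{2}$ needs only $s_2$; and each node $\enode{k}$ for $3 \leq k \leq N$ needs both $s_k$ and the single bit $\kappa_k$. This yields a total communication cost of $1 + 1 + 2(N-2) = 2N-2$ bits, as claimed. The main obstacle I anticipate is the sign bookkeeping in the second step: the phase $(-1)^{\sum_{j<m} s_j}$ generated by the $Z$-string in $\wlo{N}$ combines non-trivially with the $(-1)^{s_1}$ from qubit $1$'s recovery and the $(-1)^{\kappa_m}$ from qubit $m$'s recovery, and it is exactly the residual $s_2 + \cdots + s_{m-1}$ that must be canceled. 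Observing this cancellation both forces the definition of $\kappa_m$ and explains why qubits $1$ and $2$ need recovery operators that differ in structure from those applied to qubits $k \geq 3$.
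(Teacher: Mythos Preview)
Your proposal is correct and follows essentially the same route as the paper's proof: both invoke the transpose trick to identify the post-measurement state with $\wlo{N}\ket{s}$ (using that $\wlo{N}$ is real symmetric), then reduce each tensor factor via the single-qubit identities of Lemma~\ref{lem:xz-w}, and finally track the accumulated signs to see that the choice $\kappa_k = \bigl(\sum_{l=2}^{k-1}s_l\bigr)\bmod 2$ cancels them. The only cosmetic difference is that you first extract the $Z$-string phase $(-1)^{\sum_{j<r}s_j}$ before applying the recovery, whereas the paper keeps the factors $Z\ket{s_k}$, $X\ket{s_k}$, $I\ket{s_k}$ intact and simplifies via Lemma~\ref{lem:xz-w} directly; the resulting sign bookkeeping and the bit count $1+1+2(N-2)=2N-2$ are identical.
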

\begin{proof}
For conveniences, denote $\beginst\coloneqq\ket{\Phi}_{c_ie_i}^{\ox N}$ as the input state of systems $\csys, \esys$, and $U \coloneqq \wlo{N}$ as the local unitary acting on the central system $\csys$ in Fig.~\ref{fig:n-w}. Note that 
\begin{equation}
    U=\frac{1}{\sqrt{N}}\sum_{r=1}^N Z^{\ox r-1}\ox X\ox I^{\ox N-r}=\frac{1}{\sqrt{N}}\sum_{r=1}^N\bigotimes_{k=1}^N(\delta_{k<r}Z+\delta_{k=r}X+\delta_{k>r}I),
\end{equation}
where
\begin{equation}
    \delta_p=\begin{cases}
        1,&p \text{ is true};\\
        0,&p \text{ is false}.
    \end{cases}
\end{equation}
It is trivial to prove that $U$ is unitary and real symmetric. Apply $U$ on central system, and we have
\begin{equation}\label{eq:UT}
\begin{aligned}
    &\left( \bra{s}_\csys \ox I_\esys \right)\left( U_\csys \ox I_\esys \right)\beginst
    =\frac{1}{\sqrt{2^N}}\left( \bra{s}_\csys U_\csys \ox I_\esys \right) \sum_i \ket{i}_\csys \ox \ket{i}_\esys \\
    =&\frac{1}{\sqrt{2^N}}\sum_i \bra{s}_\csys U_\csys \ket{i}_\csys \ox \ket{i}_\esys
    =\frac{1}{\sqrt{2^N}}\sum_i\ketbra{i}{i}U^T\ket{s}=\frac{U^T\ket{s}}{\sqrt{2^N}}=\frac{U\ket{s}}{\sqrt{2^N}},
\end{aligned}
\end{equation}
which implies that after receiving the measurement outcome $s=s_1s_2\cdots s_N$ on the end system, the corresponding post-measurement state $\finalst{s}$ is
\begin{align}
    \finalst{s} \coloneqq& \frac{\left( \bra{s}_\csys \ox I_\esys \right)\left( U_\csys \ox I_\esys \right)\beginst}{ \norm{\left( \bra{s}_\csys \ox I_\esys \right)\left( U_\csys \ox I_\esys \right)\beginst} }
    = \frac{U\ket{s}/\sqrt{2^N}}{\norm{U\ket{s}/\sqrt{2^N}}}
    =U\ket{s}.
\end{align}

At the last step, from the measurement result, one can correspondingly perform the recovery unitary on the end system. That is,
\begin{align}~\label{eq:v-w}
     \recover{s}=(XZ)^{s_1} \ox X^{s_2} \ox \bigotimes_{k=3}^NZ^{\sum_{l=2}^{k-1}s_l}X^{s_k}=(XZ)^{s_1} \ox\bigotimes_{k=2}^NZ^{\sum_{l=2}^{k-1}s_l}X^{s_k},
\end{align}
which satisfies
\begin{equation}
    \forall\, N\ge2,\ \recover{s}=V_{N-1}^{(s \operatorname{mod} 2^{N-1})}\ox Z^{\sum_{l=2}^{N-1}s_l}X^{s_N}.
\end{equation}
Then for any $s\in\{0,1\cdots,2^{N}-1\}$, by Lemma~\ref{lem:xz-w} we have
\begin{align}
    &\recover{s}U\ket{s}\\
    =&((XZ)^{s_1}\ox\bigotimes_{k=2}^nZ^{\sum_{l=2}^{k-1}s_l}X^{s_k})\cdot(\frac{1}{\sqrt{N}}\sum_{r=1}^N\bigotimes_{k=1}^N(\delta_{k<r}Z\ket{s_k}+\delta_{k=r}X\ket{s_k}+\delta_{k>r}I\ket{s_k})\\
    =&\frac{1}{\sqrt{N}}\sum_{r=1}^N \ket{\delta_{r=1}}\ox\bigotimes_{k=2}^N(\delta_{k<r}Z^{\sum_{l=2}^{k-1}s_l}X^{s_k}Z\ket{s_k}+\delta_{k=r}Z^{\sum_{l=2}^{k-1}s_l}X^{s_k}X\ket{s_k}+\delta_{k>r}Z^{\sum_{l=2}^{k-1}s_l}X^{s_k}I\ket{s_k})\\
    =&\frac{1}{\sqrt{N}}\sum_{r=1}^N \ket{\delta_{r=1}}\ox\bigotimes_{k=2}^N(\delta_{k<r}Z^{\sum_{l=2}^{k-1}s_l}(-1)^{s_k}\ket{0}+\delta_{k=r}Z^{\sum_{l=2}^{k-1}s_l}\ket{1}+\delta_{k>r}Z^{\sum_{l=2}^{k-1}s_l}\ket{0})\\
    =&\frac{1}{\sqrt{N}}\sum_{r=1}^N \ket{\delta_{r=1}}\ox\bigotimes_{k=2}^N(\delta_{k<r}(-1)^{s_k}\ket{0}+\delta_{k=r}(-1)^{\sum_{l=2}^{k-1}s_l}\ket{1}+\delta_{k>r}\ket{0})\\
    =&\frac{1}{\sqrt{N}}\sum_{r=1}^N \ket{\delta_{r=1}}\ox\bigotimes_{k=2}^N(\delta_{k<r}\ket{0}+\delta_{k=r}\ket{1}+\delta_{k>r}\ket{0})\\
    =&\frac{1}{\sqrt{N}}\sum_{r=1}^N \bigotimes_{k=1}^N\ket{\delta_{k=r}}\\
    =&{\w{N}}.
\end{align}
We conclude that for all measurement outcome $s$, this protocol can exactly recover the output state of end system to $\w{N}$.
Based on the expression of the local operation $\recover{s}$ in Eq.~\eqref{eq:v-w}, we observe that except the first end node requires $s_1$, the second end node requires $s_2$, and each subsequent $k$-th $(k\geq 3)$ end node requires both $s_k$ and $(\sum_{l=2}^{k-1}s_l\operatorname{mod} 2)$. Consequently, the communication cost amounts to $2N-2$ classical bits. 
\end{proof}

\begin{figure}[t]
\centering
\[
\Qcircuit @C=.7em @R=.8em{
    & {} & {} & {} & \mbox{\textcolor{purple}{$\ghzlo{N}$}}   \\
    & {} & {} &{}  \\
    & \lstick{\cnode{1}} & \qw & \ctrl{1} &\ctrl{3} &\gate{H} & \qw  &\meter& \control \cw \cwx[1]\\
    & \lstick{\cnode{2}} & \qw & \targ & \qw & \qw & \qw  &\meter& \control \cw \cwx[2]\\
    \vdots & & & &  &  & & \\
    & \lstick{\cnode{N}} & \qw & \qw & \targ & \qw & \qw  &\meter& \control \cw \cwx[1]\\
    & \lstick{\enode{1}} & \qw & \qw & \qw & \qw & \qw & \qw & \gate{\hspace{0.3em} Z^{s_1} \hspace{0.3em}} \cwx[1] & \qw \\
    & \lstick{\enode{2}} & \qw & \qw & \qw & \qw & \qw & \qw & \gate{\hspace{0.2em} X^{s_2} \hspace{0.2em}} \cwx[1] & \qw  \\
    & \lstick{\enode{3}} & \qw & \qw & \qw & \qw & \qw & \qw & \gate{\hspace{0.2em} X^{s_3} \hspace{0.2em}} \cwx[2] & \rstick{\ghz{N}_e} \qw  \\
    \vdots & & &  &  & &  & \vdots \\
    & \lstick{\enode{N}} & \qw & \qw & \qw & \qw & \qw & \qw & \gate{X^{s_N}} & \qw
    \gategroup{3}{4}{6}{6}{1.5em}{--}
    \inputgroupv{3}{11}{2.5em}{7em}{\bigotimes_{i=1}^N \bellplus_{\cnode{i} \enode{i}} \hspace{5em}}
}
\]
\caption{Protocols for distributing GHZ states for $N \geq 3$. In this setting, each end node $e_i$ preshares $N$ Bell states with the central node $c_i$, and the local operation on each end node $\enode{i}$ depends on the measured outcome $s = s_1 \cdots s_N \in \set{0, 1}^{\times N}$.}~\label{fig:n-ghz}
\end{figure}
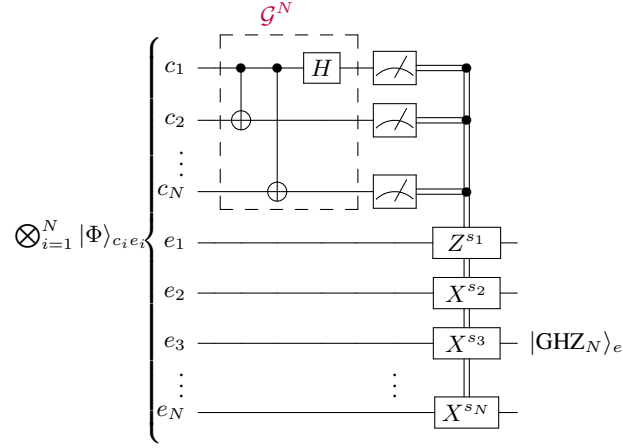

\section{Proof of the propositions about GHZ states}
\label{app:proof-ghz}

\renewcommand\theproposition{\ref{prop:n-ghz}}
\setcounter{proposition}{\arabic{proposition}-1}
\begin{proposition}[Protocol of GHZ states allocation]
There exists a one-way LOCC protocol that deterministically and exactly distributes an $N$-qubit GHZ state in a central hub with $N$ preshared Bell states and a classical communication cost of $N$ classical bits.
\end{proposition}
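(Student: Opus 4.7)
The plan is to mirror the structure of the proof of Proposition~\ref{prop:n-w}, exploiting the fact that the input $\beginst=\bellplus^{\otimes N}$ on pairs $\{\cnode{i},\enode{i}\}$ is maximally entangled, so that applying the central unitary $U:=\ghzlo{N}$ followed by a computational-basis measurement with outcome $s=s_1\cdots s_N$ teleports the transpose of $U$ onto the end system. Concretely, I would first re-establish the identity
\begin{equation}
    (\bra{s}_{\csys}\otimes I_{\esys})(U_{\csys}\otimes I_{\esys})\beginst=\tfrac{1}{\sqrt{2^N}}\,U^T\ket{s}_{\esys},
\end{equation}
exactly as in Eq.~\eqref{eq:UT}, so that the (normalized) post-measurement end-system state equals $U^T\ket{s}$ for every outcome $s$.

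Next I would compute $U^T$ explicitly. Writing $U=(H\otimes I^{\otimes(N-1)})\cdot C$ with
\begin{equation}
    C=\ketbra{0}{0}\otimes I^{\otimes(N-1)}+\ketbra{1}{1}\otimes X^{\otimes(N-1)},
\end{equation}
both factors are real symmetric in the computational basis, so $U^T=C\cdot(H\otimes I^{\otimes(N-1)})$. Applying $U^T$ to $\ket{s}$ gives, after a one-line calculation,
\begin{equation}
    U^T\ket{s}=\tfrac{1}{\sqrt 2}\bigl(\ket{0}\ket{s_2\cdots s_N}+(-1)^{s_1}\ket{1}\ket{\bar s_2\cdots\bar s_N}\bigr).
\end{equation}

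Then I would verify the recovery step. Define the local operation $\recover{s}=Z^{s_1}\otimes\bigotimes_{k=2}^{N}X^{s_k}$. Using $Z^{s_1}\ket{0}=\ket{0}$, $Z^{s_1}\ket{1}=(-1)^{s_1}\ket{1}$, $X^{s_k}\ket{s_k}=\ket{0}$, and $X^{s_k}\ket{\bar s_k}=\ket{1}$, the two $(-1)^{s_1}$ phases cancel and we obtain
\begin{equation}
    \recover{s}U^T\ket{s}=\tfrac{1}{\sqrt 2}\bigl(\ket{0}^{\otimes N}+\ket{1}^{\otimes N}\bigr)=\ghz{N},
\end{equation}
independent of $s$, so the protocol is deterministic and exact.

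Finally, I would tally the communication cost: the recovery on end node $\enode{k}$ depends on the single classical bit $s_k$, so the central system sends exactly $N$ classical bits in total, giving the claimed communication cost. The optimality of this cost is addressed separately (in the Supplemental Material, along the lines announced in the main text), so the only substantive content of this proposition is existence, which reduces to the calculation above. No real obstacle is expected; the only care needed is to keep track of the $(-1)^{s_1}$ phase through the Hadamard on $\cnode{1}$ and to confirm its cancellation after the $Z^{s_1}$ correction on $\enode{1}$.
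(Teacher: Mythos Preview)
Your proposal is correct and follows essentially the same approach as the paper: both establish that the post-measurement end state is $U^T\ket{s}$ and then apply the recovery $\recover{s}=Z^{s_1}\otimes\bigotimes_{k=2}^N X^{s_k}$ to obtain $\ghz{N}$. The only cosmetic difference is that the paper reaches $U^T\ket{s}=(Z^{s_1}\otimes X^{s_2}\otimes\cdots\otimes X^{s_N})\ghz{N}$ by writing $\ket{s}=(\bigotimes_k X^{s_k})\ket{0}^{\otimes N}$ and commuting the Paulis through $H$ and the controlled-$X$ block, whereas you compute $U^T\ket{s}$ directly; both arrive at the same state and the same communication-cost count.
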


\begin{proof}
For conveniences, denote $\beginst\coloneqq\ket{\Phi}_{c_ie_i}^{\ox N}$ as the input state of systems $\csys, \esys$, and $U \coloneqq \ghzlo{N}$ as the local unitary acting on the central system $\csys$ in Fig.~\ref{fig:n-ghz}. Note that 
\begin{align}
    U&= \left(H \ox I^{ \ox (N-1)}\right)\cdot{\rm C}(X^{ \ox (N-1)})\\
    &= \left(H \ox I^{ \ox (N-1)}\right)\cdot \left(\ketbra{0}{0}\ox I^{ \ox (N-1)}+\ketbra{1}{1} \ox X^{ \ox (N-1)}\right),
\end{align}
which is clearly a unitary matrix. Apply $U$ on central system, and similarly as \eqref{eq:UT} we have
\begin{equation}\label{eq:UT2}
\begin{aligned}
    &\left( \bra{s}_\csys \ox I_\esys \right)\left( U_\csys \ox I_\esys \right)\beginst
    =\frac{1}{\sqrt{2^N}}\left( \bra{s}_\csys U_\csys \ox I_\esys \right) \sum_i \ket{i}_\csys \ox \ket{i}_\esys \\
    =&\frac{1}{\sqrt{2^N}}\sum_i \bra{s}_\csys U_\csys \ket{i}_\csys \ox \ket{i}_\esys
    =\frac{1}{\sqrt{2^N}}\sum_i\ketbra{i}{i}U^T\ket{s}=\frac{U^T\ket{s}}{\sqrt{2^N}},
\end{aligned}
\end{equation}
which implies that after receiving the measurement outcome $s=s_1s_2\cdots s_N$ on the end system, the corresponding post-measurement state $\finalst{s}$ is
\begin{align}
    \finalst{s} \coloneqq& \frac{\left( \bra{s}_\csys \ox I_\esys \right)\left( U_\csys \ox I_\esys \right)\beginst}{ \norm{\left( \bra{s}_\csys \ox I_\esys \right)\left( U_\csys \ox I_\esys \right)\beginst} }
    = \frac{U^T\ket{s}/\sqrt{2^N}}{\norm{U^T\ket{s}/\sqrt{2^N}}}
    =U^T\ket{s}.
\end{align}

Considering
\begin{align}
    U^T\ket{s}&={\rm C}(X^{ \ox (N-1)})\cdot (H \ox I^{ \ox (N-1)})\cdot ( X^{s_0} \ox X^{s_1} \ox\cdots \ox X^{s_{N-1}})\cdot\ket{0}^{ \ox N}\\
    &={\rm C}(X^{ \ox (N-1)})\cdot ( Z^{s_0} \ox X^{s_1} \ox\cdots \ox  X^{s_{N-1}})\cdot (H \ox I^{ \ox (N-1)})\cdot\ket{0}^{ \ox N}\\
    &=( Z^{s_0} \ox X^{s_1} \ox\cdots  \ox X^{s_{N-1}})\cdot {\rm C}(X^{ \ox (N-1)})\cdot (H \ox I^{ \ox (N-1)})\cdot\ket{0}^{ \ox N}\\
    &=(Z^{s_0} \ox X^{s_1} \ox \cdots \ox X^{ s_{N-1}})\ghz{N},
\end{align}
for each post-measurement state $U^T\ket{s}$, perform the operation $\recover{s}=Z^{s_0} \ox X^{s_1} \ox \cdots \ox X^{ s_{N-1}}$ on systems $\enode{1,\ldots, N}$, and we will obtain $\ghz{N}$. We conclude that for all measurement outcome $s$, this protocol can exactly recover the output state of the end system to $\ghz{N}$. Furthermore, we observe that each end nodes requires the corresponding $s_i$ to determine the local operation since the expression of $\recover{s}$. Thus, the communication cost amounts to $N$ classical bits. 

\end{proof}

\section{Optimality of the required classical bits}
\label{appen:opt}

In this section, we establish the optimality of the protocols distributing ${\w{N}}$ for $N\geq 3$ as demonstrated in Proposition \ref{prop:W3_4cbit} and Proposition \ref{prop:WN_2N-2cbit}. Additionally, we prove that the communication cost required for distributing any pure state using one-way LOCC in a central hub is at least $N$ classical bits, as outlined in Proposition \ref{prop:pure_state_cbit}, which implies the optimality of the protocol for distributing ${\ghz{N}}$.

To distributes an $N$-qubit state $\ket{\psi}$ using one-way LOCC 
\begin{equation}
    \cL = \sum_{s \in \cS} \cM^s_\csys \ox \cR^{1,\alpha_1(s)}_{\enode{1}} \ox \ldots \ox \cR^{N,\alpha_N(s)}_{\enode{N}}
.\end{equation}
in a central hub, it is equal to find $U\in\operatorname{SU}(2^N)$ and 
\begin{align}
\cR^{\alpha(s)}_{\enode{}}=\bigotimes_{k=1}^N\cR^{k,\alpha_k(s)}_{\enode{k}}, \cR^{k,\alpha_k(s)}\in\operatorname{SU}(2)
\end{align}
such that 
\begin{align}
    \forall\,s\in\{0,1,\ldots,2^N-1\},\cR^{\alpha(s)}\cM^s\cR^{\alpha(s)\dagger} =\ketbra{\psi}{\psi},\  \cM^s= U^T\ketbra{s}{s}U^*.
\end{align}
Then the classical communication cost for $\cL$ is defined as 
\begin{equation}
    \log_2\prod_{k=1}^N\max_s\alpha_k(s).
\end{equation}
To prove the protocol distributing ${\w{3}}$ is optimal in Proposition \ref{prop:W3_4cbit}, we need two lemmas firstly.
\renewcommand\theproposition{S1}
\begin{lemma}\label{lem:1cibt}
When distributing an $N$-qubit state $\ket{\psi}$ using one-way LOCC in a cental hub, if 
\begin{equation}
    \operatorname{Tr}_{\backslash l}\ketbra{\psi}{\psi}\neq I/2,
\end{equation}
then 
\begin{equation}
    \max_s\alpha_l(s)\ge2,
\end{equation}
or equivalently the $l$-th end requires at least $1$ classical bit, where $\operatorname{Tr}_{\backslash l}$ denotes the partial trace on all systems except the $l$-th system.
\end{lemma}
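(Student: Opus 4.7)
The plan is to prove the contrapositive: if $\max_s \alpha_l(s) = 1$, meaning $\alpha_l(s) = 1$ for every $s$ and the $l$-th end node therefore applies a single $s$-independent unitary $V_l \coloneqq \cR^{l,1}$, then $\operatorname{Tr}_{\backslash l}\ketbra{\psi}{\psi} = I/2$. Two preliminary facts from the central-hub setup will be used. First, because the pre-measurement reduced state on the central system is $U(I/2^N)U^\dagger = I/2^N$, the central's computational measurement yields each $s \in \{0,1\}^N$ with uniform probability $2^{-N}$. Second, the family $\{\cM^s\}_s = \{U^T \ketbra{s}{s} U^*\}_s$ forms a resolution of identity, since $\sum_s U^T \ketbra{s}{s} U^* = U^T U^* = I$ by unitarity of $U$.

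Next, for each outcome $s$, I would factor the recovery as $\cR^{\alpha(s)} = V_l \otimes W(s)$, where $W(s)$ collects the (possibly $s$-dependent) unitaries applied on all end nodes other than $l$. Applying $\operatorname{Tr}_{\backslash l}$ to the success condition $\ketbra{\psi}{\psi} = \cR^{\alpha(s)} \cM^s \cR^{\alpha(s)\dagger}$ and using the fact that a unitary acting only on subsystems being traced out leaves the partial trace invariant, this yields
\begin{equation}
\operatorname{Tr}_{\backslash l}\ketbra{\psi}{\psi} = V_l \operatorname{Tr}_{\backslash l}[\cM^s] V_l^\dagger,
\end{equation}
valid for every $s$. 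Averaging this identity over $s$ with weight $2^{-N}$ and invoking the two preliminary facts reduces the right-hand side to $V_l (2^{-N} \operatorname{Tr}_{\backslash l} I) V_l^\dagger = V_l (I/2) V_l^\dagger = I/2$, which forces $\operatorname{Tr}_{\backslash l}\ketbra{\psi}{\psi} = I/2$, contradicting the hypothesis and completing the contrapositive.

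No significant obstacle is anticipated; the proof hinges on two elementary observations, namely that $W(s) W(s)^\dagger = I$ cancels under a partial trace over the subsystems on which $W(s)$ acts, and that the uniform average of $\ketbra{s}{s}$ over $s \in \{0,1\}^N$ equals $I/2^N$. The only care needed is in parsing $\max_s \alpha_l(s) = 1$: since $\alpha_l$ takes positive integer values, this condition forces $\alpha_l(s) \equiv 1$, which is precisely what allows $V_l$ to be pulled outside the $s$-average.
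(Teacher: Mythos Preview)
Your proposal is correct and follows essentially the same route as the paper: take the contrapositive, use that the recovery on node $l$ is $s$-independent, apply $\operatorname{Tr}_{\backslash l}$ to the success condition to eliminate the other nodes' unitaries, and then sum over $s$ using $\sum_s U^T\ketbra{s}{s}U^* = I$. The only cosmetic difference is that the paper takes the fixed recovery on node~$l$ to be $I$ without loss of generality, whereas you carry a general $V_l$ and observe at the end that $V_l(I/2)V_l^\dagger = I/2$; the underlying argument is identical.
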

\begin{proof}
Suppose the $l$-th end does not require any classical bit, without loss of generality, we could assume that
\begin{align}
    &\exists\, U\in\operatorname{SU}(2^N),\forall\,s\in\{0,1,\ldots,2^N-1\},\exists\, \cR^{k,\alpha_k(s)}\in\operatorname{SU}(2),\\
    &\text{ s.t. } \cR^{\alpha(s)}_e=\bigotimes_{k=1}^{l-1} \cR^{k,\alpha_k(s)}_{e_k}\ox I_{e_l}\ox \bigotimes_{k=l+1}^N \cR^{k,\alpha_k(s)}_{e_k} \text{ satisfying } \cR^{\alpha(s)}U^T\ket{s}\propto\ket{\psi}.
\end{align}
Hence for any $s$,
\begin{equation}
    \operatorname{Tr}_{\backslash l}[U^T\ketbra{s}{s}U^*]=\operatorname{Tr}_{\backslash l}[(\cR^{\alpha(s)})^\dagger\ketbra{\psi}{\psi}\cR^{\alpha(s)}]=    \operatorname{Tr}_{\backslash l}[\ketbra{\psi}{\psi}],
\end{equation}
which makes a contradiction that
\begin{equation}
    2^{N-1}I=\operatorname{Tr}_{\backslash l}[U^TU^*]=\sum_{s=0}^{2^N-1}\operatorname{Tr}_{\backslash l}[U^T\ketbra{s}{s}U^*]=2^N\operatorname{Tr}_{\backslash l}[\ketbra{\psi}{\psi}]\ne2^{N-1}I.
\end{equation}
\end{proof}
\begin{corollary}\label{cor:W3-e1cbit}
When distributing an $N$-qubit W state using one-way LOCC in a central hub with $N>2$, each end node requires at least $1$ classical bit, or equivalently each $\max_s\alpha_l(s)\ge2$.
\end{corollary}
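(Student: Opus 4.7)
The plan is to invoke \ref{lem:1cibt} directly: it suffices to verify that for every index $l \in \{1, \ldots, N\}$ the single-qubit reduced state $\operatorname{Tr}_{\backslash l}\ketbra{W_N}{W_N}$ is not the maximally mixed state $I/2$. By the permutation symmetry of $\w{N}$, it is enough to compute this reduced state for a single choice of $l$ and conclude by symmetry that the same holds for every end node.

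First, I would split the $N$ computational-basis terms of $\w{N}$ according to whether qubit $l$ carries the unique $\ket{1}$. Exactly one term has qubit $l$ equal to $\ket{1}$ with the complement in $\ket{0\cdots 0}$, while the remaining $N-1$ terms have qubit $l$ in $\ket{0}$ and on the complement form $\w{N-1}$. This yields the decomposition
$$\w{N} = \tfrac{1}{\sqrt{N}}\ket{1}_l\ket{0\cdots 0}_{\backslash l} + \sqrt{\tfrac{N-1}{N}}\ket{0}_l\w{N-1}_{\backslash l}.$$

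Next, because every computational-basis term of $\w{N-1}$ has Hamming weight one, it is orthogonal to the all-zero string $\ket{0\cdots 0}$. Consequently, the cross terms vanish under the partial trace $\operatorname{Tr}_{\backslash l}$, leaving the diagonal state
$$\operatorname{Tr}_{\backslash l}\ketbra{W_N}{W_N} = \tfrac{N-1}{N}\ketbra{0}{0} + \tfrac{1}{N}\ketbra{1}{1}.$$
For $N>2$ one has $1/N \ne 1/2$, so the reduced state is strictly not $I/2$. Applying \ref{lem:1cibt} at each $l$ then gives $\max_s\alpha_l(s)\ge 2$ for every end node, which is the claim.

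There is no real obstacle in this argument; the corollary is essentially an immediate consequence of \ref{lem:1cibt} once the single-qubit marginal of $\w{N}$ is in hand. The only point that warrants care is confirming that the off-diagonal block in the partial trace vanishes, which I would justify via the Hamming-weight orthogonality observation above, so that the reduced state is manifestly diagonal with unequal eigenvalues whenever $N>2$.
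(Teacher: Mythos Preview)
Your proof is correct and follows exactly the same route as the paper: decompose $\w{N}$ by singling out qubit $l$, compute the reduced state $\tfrac{N-1}{N}\ketbra{0}{0}+\tfrac{1}{N}\ketbra{1}{1}\ne I/2$ for $N>2$, and apply Lemma~\ref{lem:1cibt}. Your explicit remarks on permutation symmetry and the vanishing of the cross terms via Hamming-weight orthogonality add a touch more detail than the paper's proof, but the argument is otherwise identical.
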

\begin{proof}
Considering 
\begin{equation}
    {\w{N}}=\frac{1}{\sqrt{N}}\ket{1}\ox\ket{0}^{\ox N-1}+\sqrt{\frac{N-1}{N}}\ket{0}\ox{\w{N-1}}
\end{equation}
we have each
\begin{equation}
    \operatorname{Tr}_{\backslash l}[\ketbra{\textup{W}_N}{\textup{W}_N}]=\frac{N-1}{N}\ketbra{0}{0}+\frac1N\ketbra{1}{1}\ne I/2\text{ for } N>2.
\end{equation}
Thus by the Lemma~\ref{lem:1cibt}, we are done.
\end{proof}

Now we could prove such optimality as following.
\begin{proposition}\label{prop:W3_4cbit}
The optimal classical communication cost for a $3$-qubit W state allocation is $4$ classical bits, regardless of the operations executed by the central system and end system of the central hub.
\end{proposition}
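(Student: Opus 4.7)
The achievability is immediate from Proposition~\ref{prop:n-w} specialized to $N=3$. For the matching lower bound I argue by contradiction: suppose a one-way LOCC protocol in the central hub distributes $\w{3}$ using strictly fewer than $4$ classical bits. By Corollary~\ref{cor:W3-e1cbit} every end node already requires at least one classical bit, so the only remaining case is $\max_s \alpha_l(s) = 2$ for each $l \in \set{1,2,3}$, giving a total of exactly $\log_2 8 = 3$ bits. The plan is to rule this case out by an algebraic analysis of the recovery unitaries.

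Under this assumption the global recovery takes only $2^3 = 8$ possible values, indexed by triples $(a_1,a_2,a_3) \in \set{1,2}^{\times 3}$, while the central system produces $2^3 = 8$ measurement outcomes whose post-measurement states $U^T \ket{s}$ form an orthonormal basis of $(\CC^2)^{\ox 3}$. Two distinct outcomes cannot share a recovery, since a single unitary cannot map two orthogonal states to the common target $\w{3}$, so $s \mapsto (\alpha_1(s),\alpha_2(s),\alpha_3(s))$ is a bijection. Hence the eight vectors
\begin{equation}
    \ket{v_{a_1 a_2 a_3}} \coloneqq \bigl(\cR^{1,a_1} \ox \cR^{2,a_2} \ox \cR^{3,a_3}\bigr)^\dagger \w{3}, \quad (a_1,a_2,a_3) \in \set{1,2}^{\times 3},
\end{equation}
form an orthonormal basis of $(\CC^2)^{\ox 3}$. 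Applying the fixed unitary $\cR^{1,1} \ox \cR^{2,1} \ox \cR^{3,1}$ to every $\ket{v_{a_1 a_2 a_3}}$ preserves orthonormality, so on setting $V_i \coloneqq \cR^{i,1}\cR^{i,2\,\dagger}$ the problem reduces to showing that no three single-qubit unitaries $V_1,V_2,V_3$ can make $\bigl\{ (V_1^{\epsilon_1} \ox V_2^{\epsilon_2} \ox V_3^{\epsilon_3})\w{3} : \epsilon \in \set{0,1}^{\times 3}\bigr\}$ orthonormal, where $V_i^0 \coloneqq I$ and $V_i^1 \coloneqq V_i$.

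I then extract the two decisive orthogonality conditions by testing against $\w{3}$ itself. The one-body relation $\langle \textup{W}_3 | V_i \ox I \ox I | \textup{W}_3\rangle = \tr(V_i \rho_i) = 0$, with $\rho_i = \operatorname{Tr}_{\backslash i}\ketbra{\textup{W}_3}{\textup{W}_3} = \tfrac{2}{3}\ketbra{0}{0} + \tfrac{1}{3}\ketbra{1}{1}$, forces $2(V_i)_{00} + (V_i)_{11} = 0$. Writing $V_i = \bigl(\begin{smallmatrix} a_i & b_i \\ c_i & -2a_i \end{smallmatrix}\bigr)$, the unit norms of column $1$ and of row $2$ of the unitary $V_i$ give $|a_i|^2 + |c_i|^2 = 1$ and $|c_i|^2 + 4|a_i|^2 = 1$, whose difference forces $a_i = 0$; hence each $V_i$ is anti-diagonal with $|b_i| = |c_i| = 1$. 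A direct expansion then yields $\langle \textup{W}_3 | V_i \ox V_j \ox I | \textup{W}_3\rangle = \tfrac{1}{3}(b_i c_j + c_i b_j)$, and setting this to zero for every pair $i \ne j$ delivers $\lambda_i + \lambda_j = 0$, where $\lambda_i \coloneqq b_i/c_i$ has modulus one.

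The three identities $\lambda_1 + \lambda_2 = \lambda_1 + \lambda_3 = \lambda_2 + \lambda_3 = 0$ then produce the contradiction: adding the first two and subtracting the third yields $2\lambda_1 = 0$, incompatible with $|\lambda_1|=1$. The main obstacle is conceptual rather than technical---translating the counting bound ``$\le 3$ bits'' into the orthonormal-basis constraint through the bijection argument and the global-unitary WLOG reduction. Once that is in hand, the unitarity constraint immediately annihilates the diagonal of each $V_i$, and the rigid anti-diagonal regime cannot satisfy the cyclic phase relations across all three pairs, completing the proof.
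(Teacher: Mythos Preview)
Your proof is correct and follows essentially the same approach as the paper's: reduce to the $(2,2,2)$ case via Corollary~\ref{cor:W3-e1cbit}, observe that the eight local-unitary translates of $\w{3}$ must form an orthonormal basis, use the one-flip orthogonality to force each recovery unitary to be anti-diagonal, and then derive a contradiction from the two-flip orthogonality. Your WLOG conjugation by $\cR^{1,1}\otimes\cR^{2,1}\otimes\cR^{3,1}$ is exactly the paper's tacit assumption that the two options at each node are $I$ and a single $P$ (resp.\ $Q,R$), and your explicit phase argument $\lambda_i+\lambda_j=0\Rightarrow 2\lambda_1=0$ spells out the contradiction that the paper summarizes as ``contradictory with $P^\dagger,Q^\dagger,R^\dagger\in\operatorname{SU}(2)$ all anti-diagonal.''
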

\begin{proof}
Suppose we could distribute an $3$-qubit W state using one-way LOCC in a cental hub within $3$ classical bits. By Corollary \ref{cor:W3-e1cbit}, each end requires at least $1$ classical bit. As a result, we could formalize the statement as 
\begin{equation}
    \exists\, U\in\operatorname{SU}(8),P,Q,R\in\operatorname{SU}(2),\forall\,s,\exists\, f_1,f_2,f_3\in\{0,1\},\text{ s.t. }(P^{f_1}\ox Q^{f_2}\ox R^{f_3})\cdot U^T\ket{s}\propto{\w{3}}.
\end{equation}
Considering
\begin{equation}
    \left\{U^T\ket{s}\right\}_j=\left\{(P^{-f_1}\ox Q^{-f_2}\ox R^{-f_3})\cdot{\w{3}}\,|\, f_1,f_2,f_3=0,1\right\}.
\end{equation}
is exactly a group of orthonormal basis of $\mathbb C^8$.
Since
\begin{equation}
    \bra{\textup{W}_3}\cdot(P^{-1}\ox I\ox I)\cdot{\w{3}}=0\Longleftrightarrow \bra{0}P\ket{0}=\bra{1}P\ket{1}=0,
\end{equation}
we have
\begin{equation}
    \bra{0}Q\ket{0}=\bra{1}Q\ket{1}=\bra{0}R\ket{0}=\bra{1}R\ket{1}=0,
\end{equation}
analogously.
By 
\begin{equation}
    {\w{3}}=\sqrt{\frac{2}{3}}{\w{2}}\ox\ket{0}+\sqrt{\frac{1}{3}}\ket{00}\ket{1},
\end{equation}
we have
\begin{align}
    0=&\bra{\textup{W}_3}\cdot(P^\dagger\ox Q^\dagger\ox I)\cdot{\w{3}}
    =\frac{2}{3}\bra{\textup{W}_2}(P^\dagger\ox Q^\dagger)\w{2}+\frac{1}{3}\bra{0}P^\dagger\ket{0}\bra{0}Q^\dagger\ket{0}\\
    =&\frac13(\bra{1}P^\dagger\ket{0}\cdot \bra{0}Q^\dagger\ket{1}+
    \bra{0}P^\dagger\ket{1}\cdot \bra{1}Q^\dagger\ket{0}),
\end{align}
and moreover $\operatorname{Tr}(P^\dagger Q^\dagger)=0$. Analogously, we find
\begin{equation}
    \operatorname{Tr}(P^\dagger Q^\dagger)=\operatorname{Tr}(P^\dagger R^\dagger)=\operatorname{Tr}(R^\dagger Q^\dagger)=0,
\end{equation}
which is contradictory with $P^\dagger,Q^\dagger,R^\dagger\in\operatorname{SU}(2)$ are all anti-diagonal.
\end{proof}

To extend Proposition \ref{prop:W3_4cbit} into cases on $\w{N}$ with $N\ge4$, we need more lemmas.
\renewcommand\theproposition{S2}
\begin{lemma}\label{lem:ortho_set}
There exists at most $2^{N-3}$ matrix $S_j\in\operatorname{SU}(2^{N-3})$, such that 
\begin{equation}
    \{(I_8\ox S_j)\cdot{\w{N}}\}_j
\end{equation}
is an orthogonal set.
\end{lemma}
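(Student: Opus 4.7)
The plan is to exploit the bipartite Schmidt structure of $\w{N}$ across the cut separating the first three qubits from the remaining $N-3$. Collecting the $X_k\ket{0^N}$ terms of $\w{N}$ according to whether $k\in\{1,2,3\}$ or $k\in\{4,\ldots,N\}$ gives
\begin{equation*}
\w{N}=\sqrt{\tfrac{3}{N}}\,\w{3}\ox\ket{0^{N-3}}+\sqrt{\tfrac{N-3}{N}}\,\ket{0^3}\ox\w{N-3},
\end{equation*}
a Schmidt decomposition of rank two whose Schmidt vectors on the second factor are the orthonormal pair $\ket{0^{N-3}}$ and $\w{N-3}$. Since $I_8$ acts trivially on the first three qubits, each $(I_8\ox S_j)\w{N}$ is completely determined by the orthonormal pair $(\phi_1^{(j)},\phi_2^{(j)})\coloneqq(S_j\ket{0^{N-3}},\,S_j\w{N-3})$ of vectors in $\CC^{2^{N-3}}$.

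Expanding $\braket{(I_8\ox S_j)\w{N}}{(I_8\ox S_k)\w{N}}$ and using $\braket{\w{3}}{0^3}=0$ to kill the cross terms reduces the orthogonality hypothesis to the weighted scalar condition
\begin{equation*}
\tfrac{3}{N}\braket{\phi_1^{(j)}}{\phi_1^{(k)}}+\tfrac{N-3}{N}\braket{\phi_2^{(j)}}{\phi_2^{(k)}}=0\qquad(j\ne k).
\end{equation*}
Packaging the columns into $d\times n$ matrices $A=[\phi_1^{(1)}|\cdots|\phi_1^{(n)}]$ and $B=[\phi_2^{(1)}|\cdots|\phi_2^{(n)}]$ with $d=2^{N-3}$, the above relation together with the normalisation $\|\phi_i^{(j)}\|=1$ becomes the matrix identity $\tfrac{3}{N}A^\dagger A+\tfrac{N-3}{N}B^\dagger B=I_n$, while the within-pair orthonormality $\braket{\phi_1^{(j)}}{\phi_2^{(j)}}=0$ forces $\operatorname{diag}(A^\dagger B)=0$.

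The core of the proof -- and the step I expect to be the principal obstacle -- is promoting these constraints into the sharp bound $n\le d=2^{N-3}$. From the rank ceilings $\operatorname{rank}(A),\operatorname{rank}(B)\le d$ together with the identity one obtains only the weaker $n\le 2d$, since $A^\dagger A$ must then have at least $n-d$ eigenvalues equal to $0$ and at least $n-d$ equal to $N/3$, with the symmetric statement for $B^\dagger B$. To close the gap I would invoke the non-degeneracy $\tfrac{3}{N}\ne\tfrac{N-3}{N}$ together with $\operatorname{diag}(A^\dagger B)=0$: after an appropriate unitary change of basis on $\CC^{2^{N-3}}$ the first $d$ pairs can be brought to the canonical cyclic form $\phi_1^{(j)}=\ket{j-1}$, $\phi_2^{(j)}=\ket{j\bmod d}$; imposing the $d$ weighted orthogonality relations against a putative extra pair $(\phi_1^{(d+1)},\phi_2^{(d+1)})$ then propagates the components of $\phi_2^{(d+1)}$ around the cycle and forces $\|\phi_2^{(d+1)}\|=3/(N-3)\neq 1$, contradicting the unit-norm condition. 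Tightness of the bound is immediate from this same canonical family, which realises $d=2^{N-3}$ pairwise orthogonal $(I_8\ox S_j)\w{N}$ termwise.
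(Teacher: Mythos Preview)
Your reduction to the weighted Gram identity $\tfrac{3}{N}A^\dagger A+\tfrac{N-3}{N}B^\dagger B=I_n$ together with $\operatorname{diag}(A^\dagger B)=0$ is correct, but the step ``bring the first $d$ pairs to the cyclic form $\phi_1^{(j)}=\ket{j-1},\ \phi_2^{(j)}=\ket{j\bmod d}$ by a single unitary change of basis'' is the entire difficulty, and it is false. For $d=4$ (that is, $N=5$) the four admissible pairs $(\ket0,\ket1),(\ket2,\ket3),(\ket1,\ket0),(\ket3,\ket2)$ satisfy all of your constraints yet form two disjoint $2$-cycles; since this family is closed under swapping the two members of every pair while the single $4$-cycle is not, no unitary on $\CC^4$ (together with any relabelling of the indices $j$) carries one configuration to the other. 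Your propagation argument, which depends entirely on the single-cycle structure, therefore never gets started.

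There is a second, independent failure: even if the cyclic form were available, your propagation yields $\lVert\phi_2^{(d+1)}\rVert=3/(N-3)$, which equals $1$ precisely when $N=6$, so no contradiction arises there. This is not a removable edge case. For $N=6$ the two Schmidt coefficients $3/N$ and $(N-3)/N$ coincide, and one can produce $2d=16$ pairwise orthogonal vectors $(I_8\ox S_j)\w{6}$ by letting the $8\times2$ isometries $[\,S_j\ket{000}\ \vert\ S_j\w{3}\,]$ run over a Pauli-block Hilbert--Schmidt basis of $\CC^{8\times2}$; this already exceeds the asserted bound $2^{N-3}=8$, so no argument along your lines can close the gap from $2d$ to $d$ uniformly in $N$. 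For comparison, the paper does not work through your Gram/diagonal constraints at all: it argues by claiming that every $(I_8\ox S_j)\w{N}$ lies in a specific $2^{N-3}$-dimensional linear subspace of the rank-two sector $(\w{3}\oplus\ket{000})\ox\CC^{2^{N-3}}$ and then reads off the bound by dimension counting --- a genuinely different route from the one you sketch.
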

\begin{proof}
Considering
\begin{equation}
    {\w{N}}=\sqrt{\frac{3}{N}}{\w{3}}\ox\ket{0}^{\ox N-3}+\sqrt{\frac{N-3}{N}}\ket{0}^{\ox 3}\ox{\w{N-3}},
\end{equation}
we find for $N>3$, 
\begin{align}
    &\dim_{\mathbb C}\operatorname{Span}(\{(I_8\ox S_j)\cdot{\w{N}}\}_j)
    =\dim_{\mathbb C}\operatorname{Span}({\w{3}}\ox\mathbb C^{2^{N-3}},\ket{000}\ox\mathbb C^{2^{N-3}})\\
    =&\dim_{\mathbb C}(({\w{3}}\oplus\ket{000})\ox\mathbb C^{2^{N-3}})=2^{N-2},
\end{align}
and 
\begin{equation}
    \{(I_8\ox S_j)\cdot{\w{N}}\}_j\subseteq \left(\sqrt{\frac{N-3}{N}}{\w{3}}-\sqrt{\frac{3}{N}}\ket{0}^{\ox 3})\ox \mathbb C^{2^{N-3}}\right)^\perp\cap\left(({\w{3}}\oplus\ket{000})\ox\mathbb C^{2^{N-3}}\right)
\end{equation}
with
\begin{align}
    &\dim_{\mathbb C}\left(\left(\sqrt{\frac{N-3}{N}}{\w{3}}-\sqrt{\frac{3}{N}}\ket{0}^{\ox 3})\ox \mathbb C^{2^{N-3}}\right)^\perp\cap\left(({\w{3}}\oplus\ket{000})\ox\mathbb C^{2^{N-3}}\right)\right)\nonumber\\
    =&2^{N-2}-2^{N-3}=2^{N-3},
\end{align}
As a result, there exists at most $2^{N-3}$ matrix $S_j\in\operatorname{SU}(2^{N-3})$, such that 
\begin{equation}
    \{(I_8\ox S_j)\cdot{\w{N}}\}_j
\end{equation}
is an orthogonal set.
\end{proof}

The following lemma plays a key role to derive a contradiction in proving the optimality of the protocols distributing ${\w{N}}$ for $N\ge4$.
\renewcommand\theproposition{S3}
\begin{lemma}\label{lem:WN->3}
The matrix equation 
\begin{equation}
    \sum_{s=0}^1\sum_{k=0}^1\sum_{l=0}^1(P^{-s}\ox Q^{-k}\ox R^{-l})\cdot\left(3\ketbra{\textup{W}_3}{\textup{W}_3}+(N-3)\ketbra{000}{000}\right)\cdot(P^{s}\ox Q^{k}\ox R^{l})= NI_8
\end{equation}
in variables $P,Q$ and $R$ has solutions in $\operatorname{SU}(2)$ only if $N=2$. 
\end{lemma}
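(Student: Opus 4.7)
The plan is to convert the operator identity into a system of polynomial constraints on the $\operatorname{SU}(2)$ parameters of $P$, $Q$, $R$, and to show that this system admits a solution only when $N=2$.

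First, I would split the left-hand side as $\Sigma=\Sigma_W+\Sigma_0$, where, writing $U_{skl}:=P^s\ox Q^k\ox R^l$,
\begin{align*}
\Sigma_W &:= 3\sum_{s,k,l}U_{skl}^{\dagger}\ketbra{\textup{W}_3}{\textup{W}_3}U_{skl},\\
\Sigma_0 &:= (N-3)\sum_{s,k,l}U_{skl}^{\dagger}\ketbra{000}{000}U_{skl}.
\end{align*}
The key structural observation is that $\Sigma_0$ factorises as $\Sigma_0 = (N-3)\,A_P\ox A_Q\ox A_R$, where each $A_T := \ketbra{0}{0}+T^{\dagger}\ketbra{0}{0}T$ is a PSD $2\times 2$ matrix of trace $2$ with eigenvalues $1\pm|\langle 0|T|0\rangle|$. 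Writing $P$, $Q$, $R$ in the standard $\operatorname{SU}(2)$ form with entries $(a,b)$, $(\alpha,\beta)$, $(\gamma,\delta)$, normalised by $|a|^2+|b|^2=|\alpha|^2+|\beta|^2=|\gamma|^2+|\delta|^2=1$, makes the equation $\Sigma_W+\Sigma_0=NI_8$ an explicit polynomial system in these nine complex parameters.

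Next, I would extract scalar constraints by reading off specific matrix elements of both sides. The eight diagonal identities $\langle ijk|\Sigma|ijk\rangle=N$ tie the moduli $|a|,|\alpha|,|\gamma|$ together through products of the form $\langle i|A_P|i\rangle\langle j|A_Q|j\rangle\langle k|A_R|k\rangle$. The off-diagonal identities $\langle ijk|\Sigma|i'j'k'\rangle=0$ produce polynomial cancellations between the $\Sigma_W$ contribution and the factor $(N-3)(A_P\ox A_Q\ox A_R)_{ijk,i'j'k'}$ coming from $\Sigma_0$. The most informative are the three Hamming-weight-one off-diagonal pairs $(100,010)$, $(100,001)$, $(010,001)$, each of which yields an equation whose $N$-dependence enters linearly through a term proportional to $(N-3)ab\bar\alpha\bar\beta(1+|\gamma|^2)$ (and cyclic analogues), together with the ``long-range'' entry $(000,111)$, whose $\Sigma_0$ contribution reduces to $-(N-3)\overline{ab\alpha\beta\gamma\delta}$.

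Finally, I would combine these scalar constraints: use the three short-range off-diagonal identities to obtain relations among the phases of $b$, $\beta$, $\delta$; then use the diagonal identities together with the $\operatorname{SU}(2)$ normalisations to eliminate the moduli, collapsing the system to a single algebraic equation in $N$ whose unique root is $N=2$. The main obstacle is algebraic bulk: a full expansion of $\Sigma_W$ produces polynomials of degree up to six in the nine complex variables, and the $64$ matrix-element equations are heavily redundant, so the proof must identify a minimal informative subset whose joint content already forces the conclusion. A secondary subtlety is to separately dispose of the degenerate configurations (e.g.\ one of $P$, $Q$, $R$ being diagonal, so $b=0$, or anti-diagonal, so $a=0$), which trivialise large portions of the polynomial equations but must be verified not to yield spurious solutions.
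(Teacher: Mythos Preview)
Your route is genuinely different from the paper's. The paper does not attempt any by-hand analysis of matrix elements; it parametrises $P=p_0I+ip_1X+ip_2Y+ip_3Z$ (and likewise $Q,R$) with real $p_j,q_j,r_j$, writes the $64$ scalar equations coming from the $8\times8$ identity together with $\det P-1$, $\det Q-1$, $\det R-1$, and then simply computes a Gr\"obner basis of the resulting polynomial ideal in the $13$ variables $p_0,\ldots,r_3,N$. The punchline is that $N-2$ appears in that Gr\"obner basis, so every common zero of the system satisfies $N=2$. This is a one-line computer-algebra argument and is the entire proof.

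Your plan, by contrast, tries to isolate a small informative subset of the $64$ equations and combine them by hand. The structural observations you make are correct (the factorisation $\Sigma_0=(N-3)A_P\otimes A_Q\otimes A_R$, the eigenvalues $1\pm|\langle0|T|0\rangle|$ of $A_T$, the identification of the Hamming-weight-one off-diagonal entries as the most $N$-sensitive), and the degenerate cases you flag are the right ones. The gap is in the final step: you assert that the selected constraints ``collapse the system to a single algebraic equation in $N$ whose unique root is $N=2$,'' but you do not exhibit any such combination, and nothing in the proposal explains \emph{why} the handful of entries you chose already generates the full content of the ideal. The Gr\"obner computation shows that $N-2$ does lie in the ideal, but it gives no hint that a short certificate using only your entries exists; a priori the elimination could require many more of the $64$ equations and high-degree syzygies among them. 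So as written, the proposal is a plausible outline rather than a proof: the decisive algebraic step is promised but not delivered, whereas the paper outsources exactly that step to a Gr\"obner-basis engine.
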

\begin{proof}
Let $P=p_0I+ip_1X+ip_2Y+ip_3Z$, $Q=q_0I+iq_1X+iq_2Y+iq_3Z$, $R=r_0I+ir_1X+ir_2Y+ir_3Z$, and then we could find $N-2$ is in the Gr\"obner basis of the ideal generated by 
\begin{align}
    \Big\{&\sum_{s=0}^1\sum_{k=0}^1\sum_{l=0}^1(P^{-s}\ox Q^{-k}\ox R^{-l})\cdot\left(3\ketbra{\textup{W}_3}{\textup{W}_3}+(N-3)\ketbra{000}{000}\right)\cdot(P^{s}\ox Q^{k}\ox R^{l})- NI_8,\nonumber\\
    &\det P-1,\det Q-1,\det R-1\Big\}
\end{align}
in variables
\begin{equation}
    \{p_0,p_1,p_2,p_3,q_0,q_1,q_2,q_3,r_0,r_1,r_2,r_3,N\},
\end{equation}
which indicates that such matrix equation has solutions only if $N-2=0$. Here Gr\"obner basis is a key method to solving multivariate polynomial equation system in symbolic computation. Refers to \cite{cox2005using} to get more details for Gr\"obner basis.
\end{proof}

Now we could extend such optimality as following.
\renewcommand\theproposition{S2}
\begin{proposition}\label{prop:WN_2N-2cbit}
The optimal classical communication cost for an $N$-qubit W state allocation is $2N-2$ classical bits, regardless of the operations executed by the central system and end system of the central hub.
\end{proposition}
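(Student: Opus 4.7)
The plan is to argue by contradiction, in the spirit of Proposition~\ref{prop:W3_4cbit}, using Lemma~\ref{lem:ortho_set} as the orthogonality-counting ingredient and Lemma~\ref{lem:WN->3} as the final algebraic obstruction. I would assume there exists a one-way LOCC protocol distributing $\w{N}$ with strictly fewer than $2N-2$ classical bits and derive an inconsistency.

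First, I would reduce to a canonical configuration. By Corollary~\ref{cor:W3-e1cbit}, every end node must receive at least one classical bit, so $\max_s \alpha_k(s)\geq 2$ for each $k$. Writing $\max_s \alpha_k(s)=2^{b_k}$ with $b_k\geq 1$ in the integer-bit model, a total cost $\sum_k b_k \leq 2N-3$ forces $\sum_k (b_k-1)\leq N-3$, so at least three end nodes have $b_k=1$. After relabeling, these are $e_1,e_2,e_3$, and on them the recovery takes the form $P^{f_1(s)}\otimes Q^{f_2(s)}\otimes R^{f_3(s)}$ for fixed $P,Q,R\in\operatorname{SU}(2)$ and bit-valued flags $f_i(s)\in\{0,1\}$; on the remaining $N-3$ nodes the recovery is an arbitrary product unitary which I collectively denote $V^{(s)}_{\mathrm{rest}}$.

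Next, I would partition the $2^N$ measurement outcomes into the eight equivalence classes $\mathcal{S}_{f_1,f_2,f_3}$ indexed by the bit-triple. Within a single class, each $U^T\ket{s}$ shares the common prefix $P^{-f_1}\otimes Q^{-f_2}\otimes R^{-f_3}$ on the first three qubits, so after stripping this prefix the states take the form $(I_8\otimes S^{(s)})\w{N}$ for some $S^{(s)}\in\operatorname{SU}(2^{N-3})$. Because $\{U^T\ket{s}\}_s$ is an orthonormal basis, these states must be pairwise orthogonal, and Lemma~\ref{lem:ortho_set} bounds the number of such orthogonal $S^{(s)}$'s by $2^{N-3}$. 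Since the eight classes jointly partition $2^N$ outcomes, pigeonhole forces each class to contain exactly $2^{N-3}$ elements.

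Finally, I would apply the resolution of identity $\sum_s U^T\ket{s}\!\bra{s}U^* = I_{2^N}$ and partial-trace over the last $N-3$ qubits. Since partial trace is invariant under local unitaries on the traced subsystem, the dependence on $V^{(s)}_{\mathrm{rest}}$ disappears, leaving $\operatorname{Tr}_{\mathrm{rest}}\ketbra{\textup{W}_N}{\textup{W}_N} = \tfrac{3}{N}\ketbra{\textup{W}_3}{\textup{W}_3} + \tfrac{N-3}{N}\ketbra{000}{000}$ conjugated by each of the eight prefixes, with uniform multiplicity $2^{N-3}$. Clearing the common factor reproduces exactly the matrix equation of Lemma~\ref{lem:WN->3}, which admits no solution in $\operatorname{SU}(2)^{\times 3}$ for $N\geq 3$; this is the desired contradiction. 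The hard part will be the clean bookkeeping that ensures the arbitrary unitaries on the last $N-3$ systems wash out completely under the partial trace, and that the pigeonhole-saturated class sizes deliver the Lemma~\ref{lem:WN->3} identity with the correct coefficients.
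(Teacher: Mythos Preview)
Your proposal is correct and follows essentially the same route as the paper: reduce (via Corollary~\ref{cor:W3-e1cbit} and counting) to the situation where three end nodes each receive a single bit, invoke Lemma~\ref{lem:ortho_set} together with pigeonhole to force each of the eight $(f_1,f_2,f_3)$-classes to have exactly $2^{N-3}$ outcomes, then partial-trace the resolution of identity over the last $N-3$ qubits to obtain precisely the matrix equation of Lemma~\ref{lem:WN->3}. The only cosmetic differences are that the paper handles $N=3$ separately via Proposition~\ref{prop:W3_4cbit} and phrases the pigeonhole step as ``WLOG $f_k(s)=s_k$'', whereas you treat all $N\geq 3$ uniformly and make the class-size counting explicit.
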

\begin{proof}
Nowadays the case for $N=3$ is proved in Proposition \ref{prop:W3_4cbit}, we will consider $N\ge4$ following.
Suppose we could distribute an $N$-qubit W state with three end nodes requiring only $1$ classical bit. By above lemmas, we have
\begin{align}
    &\exists\, U\in\operatorname{SU}(2^N),P_1,P_2,P_3,\cR^{k,\alpha_k(s)}\in\operatorname{SU}(2),f_1(s),f_2(s),f_3(s)\in\{0,1\},\\
    &\text{ s.t. }\forall\,s,\ \cR^{\alpha(s)}U^T\ket{s}\propto{\w{N}}\text{ and }\forall\,k=0,1,2,\ \cR^{k,\alpha_k(s)}=P_k^{f_k(s)}.
\end{align}
Considering
\begin{equation}
    U^T\ket{s}\propto(P_1^{f_1(s)}\ox P_2^{f_2(s)}\ox P_3^{f_3(s)}\ox \bigotimes_{k=4}^N\cR^{k,\alpha_k(s)})^\dagger{\w{N}},
\end{equation}
we find 
\begin{equation}
    \left\{(P_1^{f_1(s)}\ox P_2^{f_2(s)}\ox P_3^{f_3(s)}\ox \bigotimes_{k=4}^N\cR^{k,\alpha_k(s)})^\dagger{\w{N}}\right\}_s
\end{equation}
is exactly a group of orthonormal basis of $\mathbb C^{2^N}$. By Lemma \ref{lem:ortho_set} and the drawer principle, without loss of generality, we could assume $f_1(s)=s_1,f_2(s)=s_2,f_3(s)=s_3$,
i.e. each $P_1^{s_1}\ox P_2^{s_2}\ox P_3^{s_3}$ is corresponding to a $2^{N-3}$-dimensional subspace. Similar as the proof of Lemma \ref{lem:1cibt}, we find
\begin{align}
    &2^{N-3}I_8=\operatorname{Tr}_{\backslash\{1,2,3\}}[U^TU^*]
    =\sum_s\operatorname{Tr}_{\backslash\{1,2,3\}}[U^T\ketbra{s}{s}U^*]\\
    =&\sum_{s_1,s_2,s_3}2^{N-3}\operatorname{Tr}_{\backslash\{1,2,3\}}[(P_1^{-s_1}\ox P_2^{-s_2}\ox P_3^{-s_3})\ketbra{\textup{W}_N}{\textup{W}_N}(P_1^{s_1}\ox P_2^{s_2}\ox P_3^{s_3})]\\
    =&2^{N-3}\sum_{s_1,s_2,s_3}(P_1^{-s_1}\ox P_2^{-s_2}\ox P_3^{-s_3})\operatorname{Tr}_{\backslash\{1,2,3\}}[\ketbra{\textup{W}_N}{\textup{W}_N}](P_1^{s_1}\ox P_2^{s_2}\ox P_3^{s_3})\\
    =&\frac{2^{N-3}}{N}\sum_{s_1,s_2,s_3}(P_1^{-s_1}\ox P_2^{-s_2}\ox P_3^{-s_3})\left(3\ketbra{\textup{W}_3}{\textup{W}_3}+(N-3)\ketbra{000}{000}\right)(P_1^{s_1}\ox P_2^{s_2}\ox P_3^{s_3}),
\end{align}
which is contradictory with Lemma \ref{lem:WN->3}. Thus, when we distribute an $N$-qubit W state, except at most two end nodes, any other end nodes requires at least $2$ classical bits. As a conclusion, it requires at least $2N-2$ classical bits in this setting.
\end{proof}

Similarly, we find the communication cost of distributing any pure state is at least $N$ classical bits using one-way LOCC in a central hub in Proposition \ref{prop:pure_state_cbit}, which implies the optimality of the protocol distributing ${\ghz{N}}$.

\renewcommand\theproposition{S3}
\begin{proposition}\label{prop:pure_state_cbit}
The optimal classical communication cost for an arbitrary $N$-qubit pure state allocation is $N$ classical bits, regardless of the operations executed by the central system and end system of the central hub.
\end{proposition}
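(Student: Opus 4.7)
My plan is a direct pigeonhole argument based on the orthogonality of post-measurement states. Fix an arbitrary $N$-qubit target $\ket{\psi}$ and any valid one-way LOCC protocol in the central-hub framework that distributes it deterministically and exactly, specified by $U \in \operatorname{SU}(2^N)$, the outcome set $\{0,1\}^N$ of the computational-basis measurement on the $N$ central qubits, and local recoveries $\cR^{k,\alpha_k(s)} \in \operatorname{SU}(2)$ indexed by $\alpha_k(s) \in \{1,\dots,M_k\}$ with $M_k \coloneqq \max_s \alpha_k(s)$. By definition, the total classical cost is $C = \log_2 \prod_{k=1}^N M_k$, so the goal is to show $\prod_k M_k \ge 2^N$.

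First I will reuse the calculation from the proofs of Propositions~\ref{prop:n-w} and~\ref{prop:n-ghz} to conclude that the post-measurement state on the end system conditioned on outcome $s$ is $U^T\ket{s}$, and that the success requirement reads
\begin{equation}
    \bigotimes_{k=1}^N \cR^{k,\alpha_k(s)} \cdot U^T\ket{s} \;\propto\; \ket{\psi} \qquad \text{for all } s \in \{0,1\}^N.
\end{equation}
The main step is then to observe that the map $s \mapsto (\alpha_1(s),\dots,\alpha_N(s))$ must be injective. Indeed, if two distinct outcomes $s_1 \ne s_2$ yielded the same index tuple, they would share the same recovery unitary $\bigotimes_k \cR^{k,\alpha_k(s_1)} = \bigotimes_k \cR^{k,\alpha_k(s_2)}$, forcing $U^T\ket{s_1} \propto U^T\ket{s_2}$ via the displayed equation --- impossible, since $\{U^T\ket{s}\}_s$ is an orthonormal basis and hence contains no two parallel vectors. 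Consequently the $2^N$ index tuples produced by varying $s$ are all distinct elements of $\{1,\dots,M_1\}\times\cdots\times\{1,\dots,M_N\}$, whose cardinality is $\prod_k M_k$. This immediately gives $\prod_k M_k \ge 2^N$, i.e., $C \ge N$. Combined with the matching $N$-bit upper bound achieved by the GHZ protocol of Proposition~\ref{prop:n-ghz}, this shows that $N$ classical bits is a universal and tight lower bound for worst-case pure-state allocation.

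I do not foresee a serious obstacle. The one point worth being careful about is whether distinct indices $\alpha_k$ could nominally label the same local unitary, which would inflate $M_k$ beyond the true number of distinct operations used; but any such redundancy only makes the pigeonhole bound easier to satisfy, not harder. Global phases are likewise harmless: writing the success condition as a proportionality, or absorbing phases into $U$, leaves the injectivity deduction intact, and the argument manifestly holds independently of the particular $\ket{\psi}$ chosen.
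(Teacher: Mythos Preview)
Your proposal is correct and follows essentially the same approach as the paper: the paper's one-line proof observes that the $2^N$ vectors $U^T\ket{s}=(\cR^{\alpha(s)})^\dagger\ket{\psi}$ are mutually orthogonal, so the $\cR^{\alpha(s)}$ must all be distinct, which is exactly the injectivity-plus-pigeonhole argument you spell out in detail. Your version is simply a more explicit rendering of the same idea, with the added (and welcome) remark that the GHZ protocol shows the bound is tight.
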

\begin{proof}
Considering $U^T\ket{s}=(\cR^{\alpha(s)})^\dagger\ket{\psi}$ is orthogonal to each other, it needs at least $N$ classical bits to distinguish those $\cR^{\alpha(s)}$s. As a result, it requires at least $N$ classical bits to distribute any $N$-qubit state using one-way LOCC in a central hub.
\end{proof}

\renewcommand\theproposition{S2}
\begin{corollary}
The protocol for allocating any $N$-qubit graph state in \cite {khatri2022design} achieves the optimal classical communication cost of $N$ classical bits. 
\end{corollary}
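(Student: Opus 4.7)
The plan is to combine the universal lower bound from Proposition~\ref{prop:pure_state_cbit} with the known construction cited in the paper. Since any $N$-qubit graph state is a pure state on $N$ qubits, it lies within the scope of Proposition~\ref{prop:pure_state_cbit}, which says that any one-way LOCC protocol in the central hub that deterministically and exactly distributes an $N$-qubit pure state from $N$ preshared Bell states requires at least $N$ classical bits of communication in total. This immediately yields the lower bound $C(\rho_{\mathrm{graph}}) \ge N$ for every graph state $\rho_{\mathrm{graph}}$.

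For the matching upper bound, I would invoke the protocol of \cite{khatri2022design}, whose resource count is summarized in the main text: it deterministically distributes an arbitrary $N$-qubit graph state from $N$ preshared Bell states, with each end node receiving one classical bit from the central system to apply a single-qubit Pauli correction. To plug it into the framework of Eq.~\eqref{eq:c-cost}, I would verify that the protocol fits the one-way LOCC template: a unitary on the central system of $N$ qubits (consisting of controlled-$Z$ gates between central qubits wired according to the graph, followed by Hadamards), a computational-basis measurement yielding $s = s_1 \cdots s_N$, and a product of local Pauli recoveries $\cR^{i,\alpha_i(s)}$ on each end node, where $\alpha_i(s)\in\{1,2\}$ is determined solely by the single bit $s_i$. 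Therefore $\max_{s}\alpha_i(s)=2$ for each $i$, and the total cost is $\log_2(2^{N})=N$ classical bits, matching the lower bound.

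Combining the two bounds proves $C(\rho_{\mathrm{graph}})=N$ for every $N$-qubit graph state, so the protocol of \cite{khatri2022design} is optimal within the central hub architecture. The only nontrivial step is the lower bound, but it has already been established in Proposition~\ref{prop:pure_state_cbit} via the orthogonality argument on the post-measurement states $U^{T}\ket{s}$; the rest is bookkeeping that the graph-state protocol realizes this bound with equality.
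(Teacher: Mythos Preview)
Your proposal is correct and follows essentially the same route as the paper: invoke the universal lower bound of Proposition~\ref{prop:pure_state_cbit} for arbitrary $N$-qubit pure states, and pair it with the $N$-bit upper bound achieved by the graph-state protocol of \cite{khatri2022design}. The paper's own argument is a terse version of exactly this; your added description of the protocol (controlled-$Z$ gates plus Hadamards on the central system, single-bit Pauli corrections at each end node) is consistent with the cited construction and with the framework of Eq.~\eqref{eq:c-cost}.
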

This result is obvious. The communication cost of any $N$-qubit graph state allocation protocol via one-way LOCC in a central hub is $N$ qubits~\cite{cuquet2012growth, khatri2022design}. Based on Proposition~\ref{prop:pure_state_cbit}, it is obvious that the optimal classical communication cost of any $N$-qubit graph state allocation protocol in \cite {khatri2022design} is $N$ classical bits.

\section{Methods to shallow circuits}
\label{appen:shallow}
In this section, we will discussion on how to shallow the circuit $\wlo{N}$ introduced to allocate W-state based on amplitude amplification.

Considering
\begin{equation}
    \wlo{N}=\frac{1}{\sqrt{N}}\sum_{s=0}^{N-1}Z^{\ox s}\ox X\ox I^{N-s-1}
\end{equation}
is exactly a linear combination of unitaries, we could use LCU method to implement $\wlo{N}$ intuitively. Specially, we could introduce another several ancilla qubits and quantum comparator to reduce the number of control gates significantly based on following decomposition.
\begin{align}
    \wlo{N}&=\frac{1}{\sqrt{N}}\sum_{s=0}^{N-1}Z^{\ox s}\ox X\ox I^{N-s-1}\\
    &=\frac{1}{\sqrt{N}}\sum_{s=0}^{N-1}\bigotimes_{k=0}^{N-1}(\delta_{s=k}X+\delta_{s<k}I+\delta_{s>k}Z)\\
    &=\frac{1}{\sqrt{N}}\sum_{s=0}^{N-1}\prod_{k=0}^{N-1}I_{2^{k}}\ox (\delta_{s=k}X+\delta_{s<k}I+\delta_{s>k}Z)\ox I_{2^{N-k-1}}
\end{align}
Denote $n=\lceil\log_2N\rceil$, and an implement for $n$-qubit full comparator with $m$ ancilla qubits as $P_{n,m}\in\operatorname{SU}(2^{2n+2+m})$, which satisfies
\begin{equation}
    P_{n,m}(\ket{a}_{2^n}\ket{b}_{2^n}\ox I_4\ox\ket{0}_{2^m})=\ket{a}_{2^n}\ket{b}_{2^n}\ox(\delta_{a=b}I_4+\delta_{a<b}I\ox X+\delta_{a>b}X\ox I)\ox\ket{0}_{2^m}.
\end{equation}
To simplify the notations, we omit the ancilla system $\ket{0}_{2^m}$ following, as
\begin{equation}
    P_{n}\ket{a}_{2^n}\ket{b}_{2^n}\ket{00}=\ket{a}_{2^n}\ket{b}_{2^n}(\delta_{a=b}\ket{00}+\delta_{a<b}\ket{01}+\delta_{a>b}\ket{10}).
\end{equation}
For $s=0\cdots,N-1$, denote $Q^s\in\operatorname{SU}(2^{N+2})$ satisfying
\begin{align}
    Q^s\cdot(\ket{00}\ox I_{2^N})&=\ket{00}\ox I_{2^{s}}\ox X\ox I_{2^{N-s-1}}\\
    Q^s\cdot(\ket{01}\ox I_{2^N})&=\ket{01}\ox I_{2^N}\\
    Q^s\cdot(\ket{10}\ox I_{2^N})&=\ket{10}\ox I_{2^{s}}\ox Z\ox I_{2^{N-s-1}}.
\end{align}

\begin{remark}
The following is an implement for $Q^s$ composed of only two two-qubit gates:
\begin{equation}
    (\ketbra{0}{0}\ox I\ox I_{2^N}+\ketbra{1}{1}\ox I\ox I_{2^{s}}\ox Z\ox I_{2^{N-s-1}})(I\ox \ketbra{0}{0}\ox I_{2^{s}}\ox X\ox I_{2^{N-s-1}}+I\ox\ketbra{1}{1}\ox I_{2^N}).
\end{equation}
\end{remark}
Denote $D_N$ as the preparation circuit for $\frac{1}{\sqrt{N}}\sum_{s=0}^{N-1}\ket{s}_{2^n}$, and thus for any $N$-qubit state $\ket{\psi}_{2^N}$, we have 
\begin{align}
&\ket{0}_{2^n}\ket{0}_{2^n}\ket{00}\ket{\psi}_{2^N}\\
\xrightarrow{D_N\ox I_{2^{n+2+N}}}
&\frac{1}{\sqrt{N}}\sum_{s=0}^{N-1}\ket{s}_{2^n}\ket{0}_{2^n}\ket{00}\ket{\psi}_{2^N}\\
\xrightarrow{P_n\ox I_{2^N}}
&\frac{1}{\sqrt{N}}\sum_j\ket{s}_{2^n}\ket{0}_{2^n}(\delta_{s=0}\ket{00}+\delta_{s<0}\ket{01}+\delta_{s>0}\ket{10})\ket{\psi}_{2^N}\\
\xrightarrow{I_{2^{2n}}\ox\recover{0}}
&\frac{1}{\sqrt{N}}\sum_j\ket{s}_{2^n}\ket{0}_{2^n}(((\delta_{s=0}\ket{00}X+\delta_{s<0}\ket{01}I+\delta_{s>0}\ket{10}Z)\ox I_{2^{N-1}})\ket{\psi}_{2^N})\\
\xrightarrow{P_n^\dagger\ox I_{2^N}}
&\frac{1}{\sqrt{N}}\sum_j\ket{s}_{2^n}\ket{0}_{2^n}\ket{00}(((\delta_{s=0}X+\delta_{s<0}I+\delta_{s>0}Z)\ox I_{2^{N-1}})\ket{\psi}_{2^N})\\
\rightarrow&\frac{1}{\sqrt{N}}\sum_j\ket{s}_{2^n}\ket{1}_{2^n}\ket{00}(((\delta_{s=0}X+\delta_{s<0}I+\delta_{s>0}Z)\ox I_{2^{N-1}})\ket{\psi}_{2^N}).
\end{align}
Similarly, by using $(P_n^\dagger\ox I_{2^N})(I_{2^{2n}}\ox \recover{k})(P_n\ox I_{2^N})$ we have
\begin{align}
\rightarrow
&\frac{1}{\sqrt{N}}\sum_j\ket{s}_{2^n}\ket{1}_{2^n}\ket{00}((I_2\ox(\delta_{s=1}X+\delta_{s<1}I+\delta_{s>1}Z)\ox I_{2^{N-2}})\nonumber\\
&\quad\cdot
(\delta_{s=0}X+\delta_{s<0}I+\delta_{s>0}Z)\ox I_{2^{N-1}})\ket{\psi}_{2^N})\\
\rightarrow
&\frac{1}{\sqrt{N}}\sum_j\ket{s}_{2^n}\ket{2}_{2^n}\ket{00}((I_2\ox(\delta_{s=1}X+\delta_{s<1}I+\delta_{s>1}Z)\ox I_{2^{N-2}})\nonumber\\
&\quad\cdot(\delta_{s=0}X+\delta_{s<0}I+\delta_{s>0}Z)\ox I_{2^{N-1}})\ket{\psi}_{2^N})\\
\xrightarrow{\cdots}
&\frac{1}{\sqrt{N}}\sum_j\ket{s}_{2^n}\ket{N-1}_{2^n}\ket{00}\left(\left(\prod_{k=0}^{N-1}I_{2^{k}}\ox (\delta_{s=k}X+\delta_{s<k}I+\delta_{s>k}Z)\ox I_{2^{N-k-1}}\right)\ket{\psi}_{2^N}\right)\\
\rightarrow
&\frac{1}{\sqrt{N}}\sum_j\ket{s}_{2^n}\ket{0}_{2^n}\ket{00}\left(\left(\prod_{k=0}^{N-1}I_{2^{k}}\ox (\delta_{s=k}X+\delta_{s<k}I+\delta_{s>k}Z)\ox I_{2^{N-k-1}}\right)\ket{\psi}_{2^N}\right)\\
\xrightarrow{D_n^\dagger\ox I_{2^{n+2+N}}}
&\frac{1}{N}\ket{0}_{2^n}\ket{0}_{2^n}\ket{00}\left(\sum_j\left(\prod_{k=0}^{N-1}I_{2^{k}}\ox (\delta_{s=k}X+\delta_{s<k}I+\delta_{s>k}Z)\ox I_{2^{N-k-1}}\right)\ket{\psi}_{2^N}\right)+*\ket{0^\perp}\\
=&\frac{1}{\sqrt{N}}\ket{0}_{2^n}\ket{0}_{2^n}\ket{00}\wlo{N}\ket{\psi}_{2^N}+\sqrt{\frac{N-1}{N}}\ket{0^\perp}.
\end{align}
Considering
\begin{equation}
    (\bra{0}_{2^n}\ox I_{2^{n+2+N}})\cdot\ket{0^\perp}=0,\ (I_{2^n}\ox\ketbra{0}{0}_{2^{n+2}}\ox I_{2^N})\cdot\ket{0^\perp}=\ket{0^\perp},
\end{equation}
afore circuit is a $\ketbra{0}{0}_{2^n}$-block-encoding of $\frac{1}{\sqrt{N}}\wlo{N}$, whose singular values are all equal to $\frac{1}{\sqrt{N}}$ as $\wlo{N}$ unitary. By using quantum singular value transformation(QSVT) satisfying $\frac{1}{\sqrt{N}}\mapsto 1$ and we could obtain a quantum circuit as a block-encoding of $\wlo{N}$, whose complexity is just those of afore circuit multiplied by $\cO(1/\arcsin\frac{1}{\sqrt{N}})=\cO(\sqrt{N})$.

Since the cost of other gates is negligible when comparing to quantum comparator and multi-control gates in the reflection in QSVT, the cost of preparing $\wlo{N}$ is exactly the cost of full quantum comparator and multi-control gates times $\cO(N^{1.5})$. When using quantum comparator of cost $\cO(\log N)$, the total cost for preparing $\wlo{N}$ is $O(N^{1.5}\cdot \log N)$, with ancilla qubit number $\cO(\log N)$.

\end{document}